\newcommand{\name}{CrowdMine}
\newtheorem{theorem}{Theorem}
\newtheorem{constraint}{Constraint}
\newenvironment{proof-sketch}{\noindent{\bf Sketch of Proof:}\hspace*{1em}}{\qed\bigskip}
	\providecommand\BibTeX{{%
			\normalfont B\kern-0.5em{\scshape i\kern-0.25em b}\kern-0.8em\TeX}}}
\renewcommand\footnotetextcopyrightpermission[1]{}
\begin{document}

\title{Crowdsourcing Work as Mining: A Decentralized Computation and Storage Paradigm}
	\author{Canhui Chen*, Zerui Cheng*, Shutong Qu, and Zhixuan Fang}
    \thanks{*Equal contribution.}
    \thanks{Canhui Chen, Zerui Cheng, Shutong Qu are with Tsinghua University, Beijing, China. Zhixuan Fang is with Tsinghua University, Beijing, China and Shanghai Qizhi Institute, Shanghai, China. }
    \thanks{Corresponding author: Zhixuan Fang. Email: zfang@mail.tsinghua.edu.cn}
    







\begin{abstract}
Proof-of-Work (PoW) consensus mechanism is popular among current blockchain systems, which leads to an increasing concern about the tremendous waste of energy due to massive meaningless computation. To address this issue, we propose a novel and energy-efficient blockchain system, CrowdMine, which exploits useful crowdsourcing computation to achieve decentralized consensus. CrowdMine solves user-proposed computing tasks and utilizes the computation committed to the task solving process to secure decentralized on-chain storage. With our designed ``Proof of Crowdsourcing Work'' (PoCW) protocol, our system provides an efficient paradigm for computation and storage in a trustless and decentralized environment. We further show that the system can defend against potential attacks on blockchain, including the short-term 51\% attack, the problem-constructing attack, and the solution-stealing attack. We also implement the system with 40 distributed nodes to demonstrate its performance and robustness. To the best of our knowledge, this is the first system that enables decentralized Proof of Useful Work (PoUW) with general user-proposed tasks posted in a permissionless and trustless network. 
\end{abstract}



\keywords{crowdsourcing, blockchain, incentive mechanism}

%
%
%
%
%
\pagestyle{fancy}
\fancyhead{}

\maketitle

\section{Introduction}\label{sec:introduction}

Blockchain, with Bitcoin \cite{nakamoto2008bitcoin} and Ethereum \cite{buterin2014next} as its prevalent representatives, is a decentralized ledger that commits consensus and trust among distributed participants.
To coordinate the decentralized record generation, most blockchain systems adopt the Proof of Work (PoW) mechanism (e.g., Bitcoin).
In fact, according to  \cite{gervais2016security}, more than 90\% of blockchain-based cryptocurrencies are using PoW to achieve the decentralized consensus and security guarantees.
Specifically, PoW evaluates miners' computing power in solving random cryptographic problems and chooses the next block creator in proportion to their computing power.  Although the computation results are useless, the computational effort from miners is in general considered to be a guarantee for security.

The tremendous energy consumption of the PoW mining is widely recognized as a fundamental drawback of the blockchain and becomes a key challenge for its future development \cite{de2018bitcoin,beck2018beyond}.
Rauchs et.al. \cite{rauchs2020cambridge} show that Bitcoin mining consumes around 119.87 TWh of electricity each year, which is more than the electricity consumption of a medium-sized country. Moreover, the enormous energy consumption is still rapidly increasing (\cite{gallersdorfer2020energy,de2018bitcoin,truby2018decarbonizing}). 
Although researchers have proposed many consensus algorithms in hope of substituting PoW, these new consensus algorithms usually have their own inherent weaknesses. As an example,  Proof-of-Stake (PoS, \cite{king2012ppcoin}) is considered to be energy-efficient, but concerns rise on centralization, fairness and reliability (e.g., \cite{haouari2022novel} \cite{nair2021evaluation} \cite{saad2021pos}, see Section \ref{sec:related} for further discussion).
More importantly, it is indeed the massive computation in PoW mining that builds a fence against malicious tampering. Such a solid security guarantee is particularly attractive in the decentralized environment.

To address the challenge of energy waste in blockchain, in this paper, we propose a new decentralized computing system, \emph{CrowdMine}.
The system is decentralized and permissionless.
The key idea of our paradigm is that we utilize miners' computation resources to solve actual computing tasks proposed by distributed problem proposers (i.e., users), instead of solving the meaningless hash puzzle in PoW.
More importantly, we also utilize this useful computation to secure the transactions in the blockchain.
We further show that \emph{CrowdMine} can efficiently defend against the short-term 51\% attack, the problem-constructing attack, and the solution-stealing attack, maintaining security in the long term.
We summarize the design principles of the system as follows.
\begin{itemize}[leftmargin=*]
	\item A miner is eligible to create a new block if and only if the miner successfully finishes a computation task (i.e., solves a ``problem''), either proposed by users or the system. Thus, we utilize miners'  resources for useful computation.
	\item We measure the miner's computational
	contribution by the reward of the task, and propose the \emph{Maximum Aggregated Value protocol} (MAV) to resolve forking branches.
	\item Eligible miners publish blocks according to the proposed \emph{Proof of Crowdsourcing Work} (PoCW) protocol. PoCW chooses miners with probabilities in proportion to their computational contribution to users' proposed tasks.
\end{itemize}

To the best of our knowledge, CrowdMine is the first paradigm that enables PoUW with decentralized and general task posting in a permissionless and trustless network. Specifically, our proposal is novel and unique in two aspects compared with existing systems.
First, there are existing studies on Proof-of-Useful-Work (PoUW) that tried to replace PoW with useful computation, but all these works require at least one of the following strong assumptions, including a single problem proposer \cite{ball2017proofs}\cite{ball2018proofs},  a trusted third party or hardware \cite{zhang2017rem}\cite{chen2017security}, or a centralized organizer \cite{halford2014gridcoin}.
Second, in previous work about on-chain crowdsourcing (e.g., \cite{goertzel2017singularitynet}\cite{li2018crowdbc}),  the computation effort on solving user problems usually does not contribute to the security of the system. The system security relies on some existing blockchains,  and the crowdsourcing work is mostly an application beyond the underlying blockchain. As a result, these systems still adopt other consensus protocols for security, which do not fully exploit the value of computation work. 

\section{Related Work}\label{sec:related}
Researchers have proposed many miner selection protocols without massive computation. 
Typical proposals include Proof of Stake (PoS) \cite{king2012ppcoin}, Delegated Proof of Stake (DPoS) \cite{larimer2014delegated}, Ripple Protocol Consensus Algorithm (RPCA) \cite{schwartz2014ripple}, AlgoRand \cite{gilad2017algorand}, Proof of Activity \cite{bentov2014proof}, Proof of Prestige \cite{krol2019proof},  Proof of Trust \cite{bahri2018trust}, and more. 
However, these alternative protocols usually have their inherent weaknesses, among which security is the key concern, since these ``non-computation-based'' protocols significantly lower the bar for an attacker to revert records (e.g., \cite{li2017securing}\cite{kiayias2017ouroboros}\cite{brown2019formal}).

Proof of Useful Work (PoUW) arises as another way to resolve the energy inefficiency problem, where miners' computation power can be of practical use, while still being a security assurance for consensus.
One way for PoUW is to convert practical problems into hash puzzles. In theory, it can be shown that hard problems such as OV, 3SUM, APSP (\cite{ball2017proofs, ball2018proofs}), TSP\cite{loe2018conquering} and Discrete Logarithm \cite{hastings2019short} can be solved by constructing hash puzzles in traditional PoW. 
However, the range of applicable problems is still rather limited,  and a centralized trusted server is required to construct the hash puzzles and collect the results.
Another way is to replace the hash puzzle calculation with other kinds of math problems. Its representatives include PrimeCoin (number theoretical PoW) \cite{king2013primecoin} and Cuckoo cycle (graph theoretical PoW) \cite{tromp2015cuckoo}. 
However, these problems are hardly practical. Moreover, GridCoin \cite{halford2014gridcoin} incentivizes miners to put their computation power into BOINC \cite{anderson2004boinc}, but the work verification process is centralized.  CoinAI \cite{baldominos2019coin} incorporates the mining process with neural network training, but it is vulnerable to possible attacks from  dishonest miners. With the intervention of trustful third-party (such as Intel SGX \cite{costan2016intel}), REM \cite{zhang2017rem}, and PoET \cite{chen2017security} can make the miners' computation useful, but finding a trusted third-party could be difficult in a decentralized system with permissionless participation.

Researchers have also proposed many blockchain-enabled crowdsourcing systems. Separ \cite{amiri2021separ} is a multi-platform crowdsourcing system that enforces regulations in a privacy-preserving manner. SingularityNET \cite{goertzel2017singularitynet} is a distributed AI service systems. CrowdBC \cite{li2018crowdbc} is a decentralized crowdsourcing platform, and \cite{ducree2020open} further puts forward an open platform strategy that combines a blockchain-endowed token economy for finding trust in a decentralized setting.
But most of the proposed systems are built on top of existing blockchains (e.g., through smart contracts or other methods) to organize the crowdsourcing \cite{szabo1997formalizing}, relying on the underlying blockchains to guarantee the system security, which are still PoW-based in many cases.

To the best of our knowledge, our work is the first one that achieves PoUW for general problem solving and decentralized problem posting, where the useful computation guarantees an even higher level of security compared with traditional PoW.
Compared with blockchain-enabled crowdsourcing  \cite{goertzel2017singularitynet} \cite{li2018crowdbc}, our paradigm is different in that we incorporate problem solving into the mining process without relying on an underlying PoW hash computing. 
Compared with previous works on PoUW (e.g., \cite{zhang2017rem}\cite{chen2017security}\cite{halford2014gridcoin}), our system does not require a trusted third party, including the task allocation and the proof verification.
Compared with systems in \cite{ball2017proofs} \cite{ball2018proofs} \cite{loe2018conquering} \cite{hastings2019short} \cite{king2013primecoin} \cite{tromp2015cuckoo}, our system can solve a wide range of problems as long as verification is simpler than computation.

\section{System Architecture}\label{sec:model}
Our system is designed to utilize the distributed computational resources to solve user-proposed computation tasks and maintain a decentralized ledger at the same time.
To achieve decentralized task solving and record consensus, we propose Proof of Crowdsourcing Work (PoCW) as the mining protocol.
The aggregated computing power of miners, and the reward of computing tasks paid by the users, can altogether provide a security guarantee for the system.
The system architecture is shown in Figure \ref{fig:system_model}.

\begin{figure}[!ht]
	\centering
	\includegraphics[width=1\linewidth]{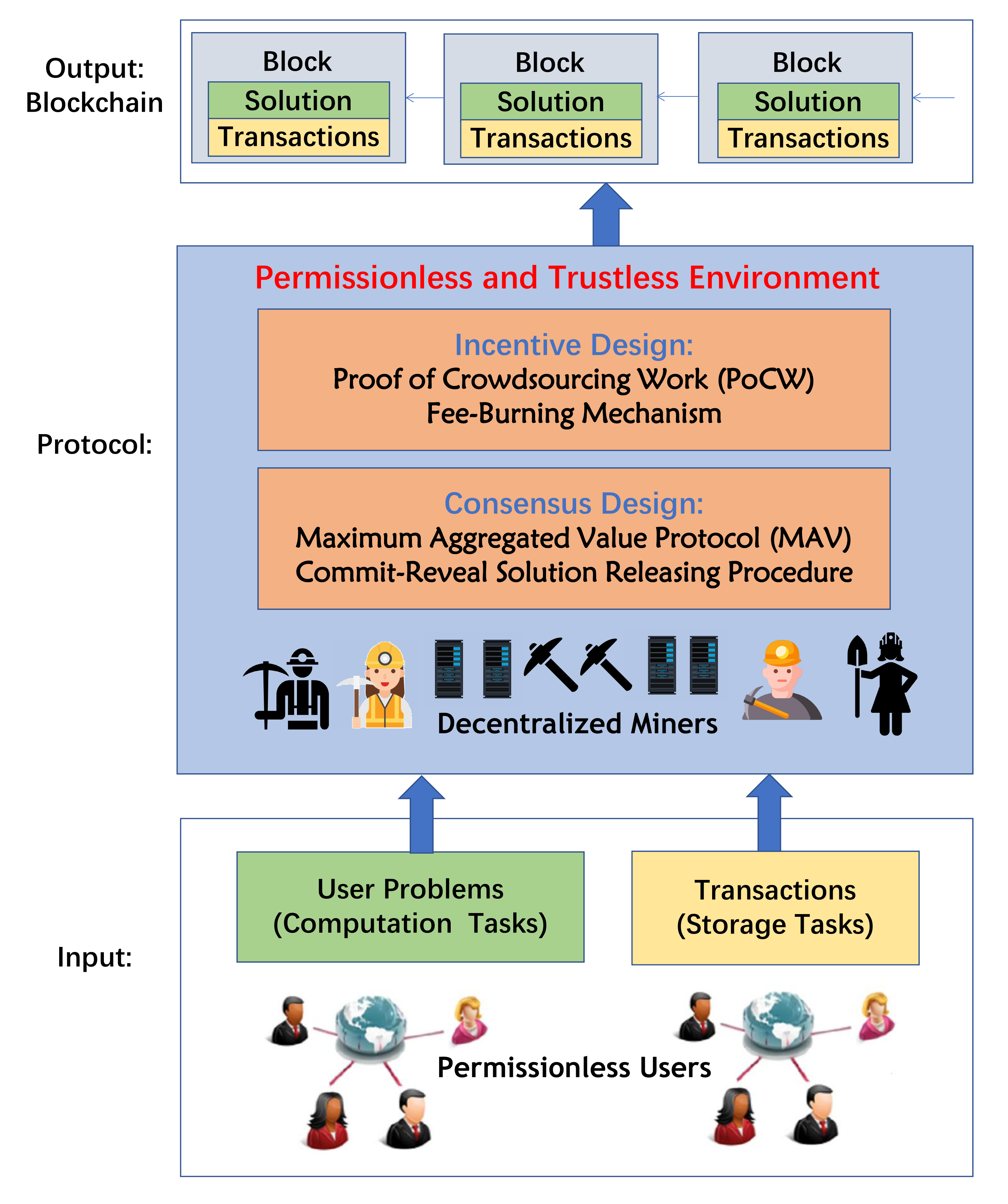}
	\caption{System architecture of CrowdMine.} 
	\label{fig:system_model}
\end{figure}

\subsection{Participants}
The system is permissionless and anonymous. Similar to many account-based blockchains like Ethereum, each participant in the system is represented by an address associated with its asset account.
There are two different roles of participants in the system, i.e.,   miners and users.
Users are participants that need to get their computation tasks solved or to get some transactions (or other types of messages) recorded in the system.
They announce the tasks to be solved or the transactions to be recorded, as well as the associated reward. 
Miners are the block creators in the system. They earn the reward by solving computation tasks or recording transactions requested by users.

\subsection{The Blockchain}

\subsubsection{Transactions}
Transactions are the records that the system stores on chain, each of which records a monetary transfer.
The sender of a transaction must include the following information in the transaction: the sender account, the receiver account, and the transaction value (i.e., the amount of transferred asset). 
In addition, the sender proposes the transaction fee to reward the miner who includes the transaction in the newly-mined block on chain.

\subsubsection{Problems}
Problems are computation tasks in the system.
There are two types of computing tasks, the user problem and the system problem. 
User problems are computation tasks proposed by users, where the problem description and the reward list are both specified (see Section \ref{sec:miningpro} for the detailed protocol).
System problems are hash puzzles similar to traditional PoW, where miners need to find a nonce that satisfies some predetermined and publicly-known cryptographic constraint \cite{nakamoto2008bitcoin}. 
System problems are available to miners all the time, so as to ensure timely publication of transactions when there is no available user problem. The problem description and reward for system problems is pre-determined and stable.

\subsubsection{Blocks}
Only when a miner successfully solves a problem, a new block can be created and appended to the end of blockchain.
The miner can claim the problem reward in the block, and include transactions in the block to obtain transaction fee. 
Thus, a block is associated specifically with one solved problem, and contains multiple transactions.
The detailed block generation process and chain rule will be introduced in Section \ref{sec:miningpro}.

\section{Mining Process Details}\label{sec:miningpro}
In this section, we introduce the mining process in \emph{CrowdMine}.
Figure \ref{fig:problem_process}  demonstrates the workflow for users and miners.

\begin{figure}[!th]
	\centering
	\includegraphics[width=1\linewidth]{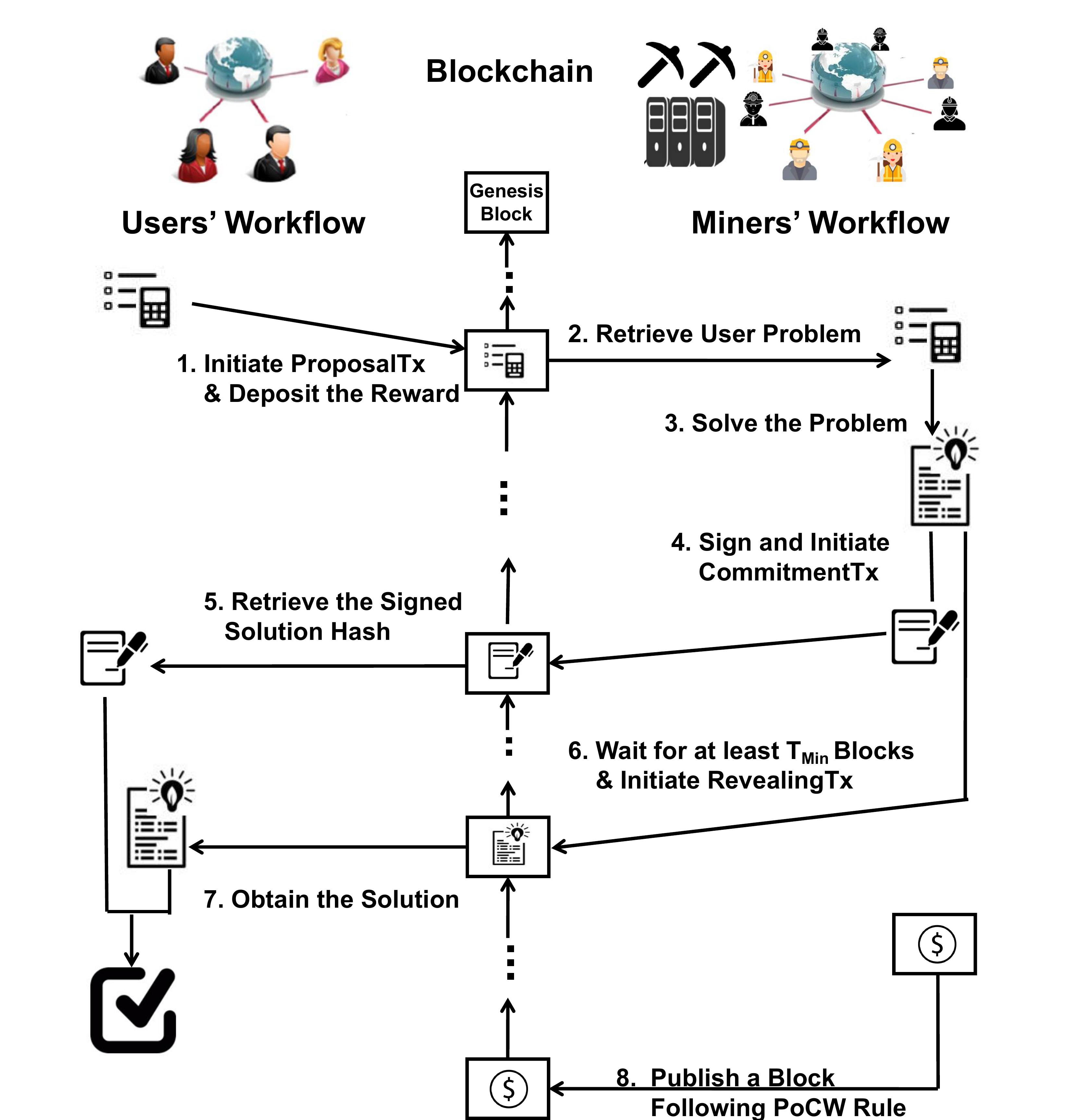}
	\caption{Problem handling process.} 
	\label{fig:problem_process}
\end{figure}

\subsection{Proposing a User Problem}

To propose a user problem, a user needs to initiate a transaction called \emph{Problem Proposal Transaction} ({\tt{ProposalTx}} for brevity). 
The transfer amount of {\tt{ProposalTx}} is the highest problem reward and should be deposited in the system.
Since many real-world computation tasks accept solutions of different quality, in addition to the highest reward,  a publicly verifiable criterion for acceptable solutions should also be specified, along with a reward list $\theta_1, \theta_2,....\theta_{l-1},\theta_{l}> 0$ to all possible legitimate solutions classified by users in $l$ levels, from the highest quality level to the lowest level (i.e., ``solution-not-found"). To avoid spamming of unsolvable tasks, we introduce the minimum reward constraint. Even for the result ``solution-not-found", users still need to pay at least a portion $\xi>0$ of the deposited solution reward, so as to compensate for miners' commitment in trying to find a  solution, where $\xi$ is a system parameter.

To encourage high-quality solutions, we also introduce the timeout mechanism. A search time $T_{\rm search}$ is associated with each task. Within $T_{\rm search}$ blocks after the problem proposal, only a solution of the highest quality specified by the reward table can be accepted. Imperfect solutions eligible for lower rewards can only be accepted only after $T_{\rm search}$ blocks have been appended on the main chain after the corresponding {\tt{ProposalTx}}.

The minimum reward constraint and the timeout mechanism, are introduced to prevent our system against the potential risk of denial-of-service attack, where unsolvable tasks are deliberately proposed for spamming at rather low cost and lead to congestion in the system.

After {\tt{ProposalTx}} is published on chain, the corresponding assets should be locked from the problem proposer. 
There are two possible ways to unlock and retrieve the assets (note that they can co-exist for one task). One is by a miner with a proper solution to claim the reward, while the other is by the task proposer to retrieve the unspent part of the reward, if miners fail to find a solution of highest quality. This mechanism is similar to the Hash Time Lock Contract (HTLC) in the lightning network \cite{poon2016bitcoin}.

\subsection{Reward-Burning Scheme}\label{burnt-fee}

For a confirmed block, the miner will receive the block reward, which is composed of the problem reward and the transaction fee. 

\textbf{Reward-burning Mechanism.} Here, we introduce a ``burning'' mechanism on the problem reward. Let $R_{\rm problem}$ denote the final amount of problem reward paid by the problem proposer according to the reward list. Note that we can also represent the reward for a system problem as $R_{\rm problem}=\tilde{R}$, where $\tilde{R}$ is a constant configured by the system. The miner will receive only part of the whole reward $R_{\rm problem}$, i.e., only $(1-k) R_{\rm problem}$ is transferred to the miner, while the remaining $k R_{\rm problem}$ of the problem reward can be considered as having ``burnt'' by the system. Here $k\in[0,1)$ is the reward-burning ratio decided by the system. Note that, the reward-burning mechanism only affects the problem reward. For the transaction fee, all the amount paid by the users will be received by the miner.

The reward-burning mechanism is similar to the EIP-1559 protocol in Ethereum \cite{buterin2019eip}. Such a mechanism not only helps stabilize the circulating tokens (see Section \ref{sec:burn} for detailed discussions), but also addresses security concerns that will be elaborated in Section \ref{sec:security}.

\subsection{Miner's Work Flow}\label{minerswork}
During the mining process, 
miners can either pick up one user problem to solve, or solve a system problem.
After successfully finding a solution, the miner starts the block generation process as follows.

If the miner solves a system problem, similar to traditional PoW-based systems, the miner can broadcast a block which contains the solution and claim the system problem reward $\tilde{R}$ immediately, and other miners can easily verify the solution.


If the miner solves a user problem, to prevent the solution from being stolen, the miner should follow the \textbf{Commit-Reveal Solution Releasing Procedure} to safely broadcast the solution, claim the reward and generate a block.

First, the miner should initiate a \emph{Solution Commitment Transaction} ({\tt{CommitmentTx}} for brevity), which is a transaction that stores the hash digest of the miner's account address and the solution.
Note that, with the cryptographic guarantee, other miners cannot obtain the solution by observing  {\tt{CommitmentTx}}.

After the confirmation of {\tt{CommitmentTx}}, the miner should further initiate a \emph{Solution Revealing Transaction} ({\tt{RevealingTx}} for brevity) which records the solution in consistency with the hash digest.
 {\tt{RevealingTx}} is allowed to be published on chain only within a certain time window after the block containing the  commitment transaction. Specifically,  {\tt{RevealingTx}} should be published $t \in [T_{\rm min}, T_{\rm max}]$ blocks after the corresponding {\tt{CommitmentTx}}, where both $T_{\rm min}, T_{\rm max}$ are both system configurations.
After $T_{\rm max}$ blocks, if no corresponding {\tt{RevealingTx}} appears on chain, the {\tt{CommitmentTx}} is expired, and other miners are allowed to propose new commitments.
Recall that when there are multiple solutions are committed and revealed on chain before the problem timeout $T_{\text{search}}$, only the solutions with the highest quality are valid. 
Particularly, when there are multiple solutions with the highest quality, the system can specify a tie-breaking rule, e.g., by the order of the corresponding {\tt{CommitmentTx}} on chain. 

\textbf{Proof of Crowdsourcing Work (PoCW).} 
After successfully publishing {\tt{RevealingTx}}, the miner is eligible to pack transactions and publish a candidate block following the Proof of Crowdsourcing Work (PoCW) protocol.
Specifically, the miner can create a valid block if the candidate block satisfies the condition of PoCW:
\begin{equation}\label{eq:pos_block}
	\text{Hash}(\text{{\tt{time}}, candidate block}) < D \cdot R_{\rm problem}.
\end{equation}

In (\ref{eq:pos_block}), {\tt{time}} represents the timestamp when the candidate block is generated, and $D$ is a pre-determined mining difficulty. 

The novelty of PoCW is that the probability of generating a new block with a user problem depends on the reward of the solved problem, $R_{\rm problem}$. It takes no previous accumulated stake (as in PoS), nor meaningless hash computation (as in PoW).
Specifically, the problem reward $R_{\rm problem}$ fairly reflects the miner's recent contribution/workload in the crowdsourcing process. The greater contribution that a miner makes, the quicker the miner can generate a new block. Since the problem reward is decided by users in the market, PoCW transfers the value of miners' useful work into the block generation right by measuring the problem reward.



\subsection{Chain Rule: Maximum Aggregated Value}\label{sec:mode:chain}

In our system, we maintain a linear and sequential chain of blocks, i.e., the ``main chain", for consensus.
For orphaned blocks that are not recognized in the main chain, the miners will not receive any reward, and the included transactions won't be admitted, either.

When miners see forking branches of chains, \textbf{ ``Maximum Aggregated Value Protocol'' (MAV)} is applied as the rule for the main chain selection. Define the ``value'' of a block to be the reward of the associated solved problem, regardless of whether the problem is a user-proposed one or a system-generated one.
According to the MAV protocol, the branch with the maximum aggregated block value should be selected as the main chain. 
If multiple forks have the same aggregated value, a miner can arbitrarily select one as the main chain to continue mining, and the tie is broken once another block is appended to one of the forks. 

\begin{figure}[!t]
	\centering
	\subfigure{
		\begin{minipage}{0.95\linewidth}
			\centering
			\includegraphics[width=3in]{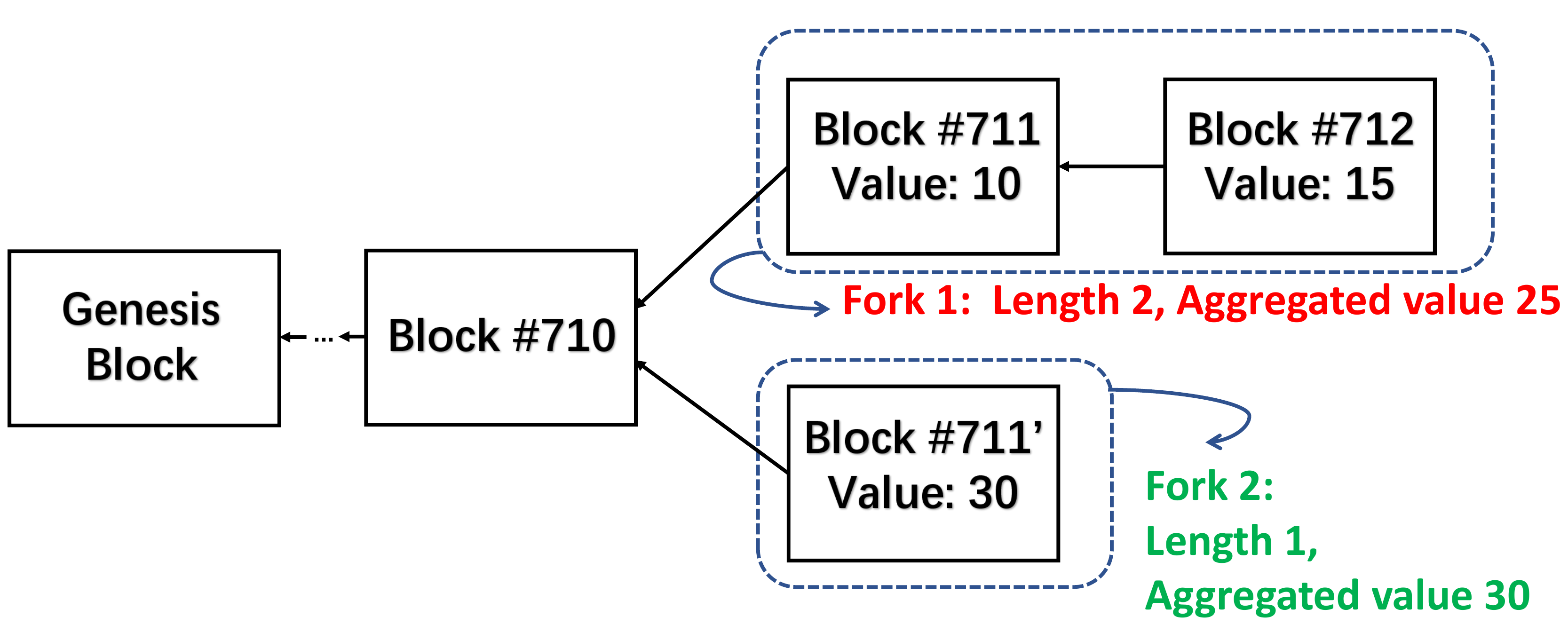}
		\end{minipage}
	}
	
%
	
	\caption{Fork 2 has a larger aggregated block value.}
	\label{fig:fork_max}
\end{figure}

Figure \ref{fig:fork_max} shows an example of two fork branches, where the branch with blocks \#711, \#712 has an aggregated value of $25$ and the branch with block \#711' has an aggregated value of $30$. In this case,  the branch with block \#711' should be the main chain.

\section{Security Analysis}\label{sec:security}
In this section, we discuss the system security against malicious users and malicious miners. We first focus on possible threats due to the unique design and feature of the proposed system. Then, we also examine the system's security guarantees on the classic 51\% attack.

\subsection{Defend Against Malicious Users}

 Compared with classical blockchain systems where users can only initiate transactions, a unique feature in our system is that users can propose useful computation problems. 
 Thus, it is important to address the possible malicious user behavior during problem proposal.
 Since a user can create multiple identities/addresses in a decentralized system, a malicious user is able to exploit the protocol by constructing problems as follows. The attacker can propose a problem with a known solution. Then, the attack can solve the problem with another identity to win the permission to produce a block and earn extra transaction fee. Moreover, the attack may lead to severe congestion in the network, if a lot of users behave this way.

To address this issue, for miners to pack transactions in a block, the following constraint should apply.
\begin{constraint} \label{tx-volume}
	(\textbf{Transaction Volume Constraint}) For a block with  problem reward $R_{\rm problem}$, the total transaction amount of all included transactions, $V$, should be less than $k \cdot R_{\rm problem}$, where $k\in [0,1)$ is the reward-burning ratio.  
\end{constraint}

    The transaction volume constraint implies that, a block with a higher problem reward can include more transactions in value. 
    In the following, we discuss how this constraint prevents malicious users from exploiting the permissionless problem proposal to revert transactions or to grab transaction fees without useful computation.

	\subsubsection{Double-spending} Double spending occurs when the miner can revert an on-chain transaction, by forcing other miners to switch to a deliberately forked chain with a conflict transaction. In this case, the attacker can spend the money twice. 
	We consider the attacker tries to revert a transaction $tx$, by building a private chain from solving the attacker's problems. Let $u_{\rm double-spending}(tx)$ denote the attacker's payoff of double spending $tx$ by forking a new branch of the constructed attacker problems. We have the following result.
	

	\begin{theorem}\label{thm:constructing}
	    Double spending a transaction $tx$ by constructing attacker problems is not profitable, i.e., $u_{\rm double-spending}(tx)<0$.
	\end{theorem}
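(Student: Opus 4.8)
The plan is to compare the attacker's costs and gains when replacing an honest chain segment by a privately-built fork of attacker-constructed problems. The key accounting identity comes from Constraint~\ref{tx-volume}: any block the attacker produces on the private fork to host the conflicting transaction (and any accompanying transactions) can carry transaction value at most $k \cdot R_{\rm problem}$, where $R_{\rm problem}$ is the problem reward of that block. But by the reward-burning scheme of Section~\ref{burnt-fee}, producing that block costs the attacker exactly the burnt portion $k \cdot R_{\rm problem}$ of the problem reward (the attacker proposes the problem and solves it with another identity, so the only net monetary effect on the attacker's combined accounts is the burn; the transaction fees are internal transfers between the attacker's own identities and cancel). Hence the amount of value the attacker can ``rescue'' by reverting $tx$ on a single fork block is strictly dominated by the burn cost incurred to create that block.

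First I would set up notation: suppose the attacker forks off at some block height and builds a private branch $B_1', B_2', \dots, B_m'$ with problem rewards $R_1', \dots, R_m'$, all on attacker-proposed problems, in order to overtake the honest branch under the MAV chain rule of Section~\ref{sec:mode:chain}. Second, I would bound the gain: the only way double-spending $tx$ yields positive value is by re-spending the coins in $tx$, so $u_{\rm double-spending}(tx) \le (\text{total transaction value re-spendable on the private branch}) - (\text{cost of building the private branch})$. Third, I would invoke Constraint~\ref{tx-volume} block-by-block: the total re-spendable transaction value across the private branch is at most $\sum_{j=1}^m k R_j'$. Fourth, I would compute the cost of building the branch: for each block $B_j'$ the attacker pays $R_j'$ as problem proposer and receives back $(1-k) R_j'$ as miner, for a net loss of $k R_j'$ per block, giving total cost $\sum_{j=1}^m k R_j'$ — plus, to actually overtake the honest chain under MAV, the attacker must in fact build strictly more aggregated value than the honest segment it is replacing, so the branch-building cost is strictly larger than whatever honest value (and hence whatever legitimate re-spendable value) it displaces. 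Combining, $u_{\rm double-spending}(tx) \le \sum_j k R_j' - \sum_j k R_j' - \varepsilon < 0$ for the strict-overtake margin $\varepsilon > 0$, and even without the strict margin the non-strict inequality already rules out profit once one notes the attacker gains nothing from the transaction fees it pays to itself.

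The main obstacle I anticipate is making the cost side airtight: I must argue carefully that the attacker cannot lower the per-block cost below $k R_j'$ — e.g., by proposing a ``fake'' problem that no honest miner would ever solve so that the burn never really materializes, or by exploiting the minimum-reward parameter $\xi$ and the timeout mechanism $T_{\rm search}$ — and that the burnt amount is genuinely unrecoverable by any of the attacker's identities. I would also need to handle the possibility that the attacker mixes attacker-problem blocks with system-problem blocks on the private fork; for system-problem blocks the argument reduces to the standard PoW-style cost, and the transaction-volume bound $k \tilde R$ still applies, so the same inequality goes through with $R_j' = \tilde R$ for those blocks. Finally, I would note that including honest users' transactions on the private fork does not help the attacker, since their fees are a wash against the opportunity cost of the honest chain and do not exceed the burn budget either; this keeps the bound $u_{\rm double-spending}(tx) < 0$ intact.
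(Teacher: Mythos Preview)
Your cost accounting is right --- each attacker-constructed block burns $k R_j'$ and the remaining $(1-k)R_j'$ is an internal transfer --- and you correctly identify MAV and the burn scheme as the two mechanisms at work. But your gain bound is aimed at the wrong chain. The profit from double-spending $tx$ is simply $v_{\rm tx}$: once the private fork wins, the original transaction is erased and the attacker keeps both the goods and the coins, \emph{regardless of what transactions (if any) appear on the private branch}. So ``total transaction value re-spendable on the private branch'' is not the quantity that caps the gain; the attacker's fork can carry no transactions at all and the attacker still pockets $v_{\rm tx}$. Your Step~2 inequality $u_{\rm double\text{-}spending}(tx)\le (\text{re-spendable value on the fork})-(\text{cost})$ is therefore not justified as written.

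The paper's argument instead applies Constraint~\ref{tx-volume} to the \emph{honest} block that originally contained $tx$: that block has problem reward $R_{\rm problem}$, and the constraint forces $v_{\rm tx} \le V < k R_{\rm problem}$. Then MAV is used to compare the two branches: for the attacker's fork to overtake, its aggregated reward must satisfy $R_{\rm attacker} \ge R_{\rm problem}$ (in the single-block case; more generally $\sum_j R_j' \ge \sum_i R_i^{\rm honest}$). Hence the burn cost obeys $k R_{\rm attacker} \ge k R_{\rm problem} > v_{\rm tx}$, and $u_{\rm double\text{-}spending} < 0$. Your Step~3 should therefore invoke the volume constraint on the displaced honest block(s), not on the attacker's own blocks; once you do that, MAV converts the honest-side bound into a lower bound on the attacker's burn, and Steps~3--4 collapse into a single inequality chain. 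The strict inequality comes for free from the strict ``$<$'' in Constraint~\ref{tx-volume}, not from an overtake margin~$\varepsilon$ (which the attacker could in principle take arbitrarily small).
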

	
	\begin{proof}
	
		We first consider the case where the target transaction $tx$ is packed in the newly-mined block. Let $R_{\rm problem}$ and $V$ denote the actual problem reward and total transaction value of the newly-mined block, respectively. 
		In this case, the attacker only needs to construct one problem to revert the transaction. Denote the attacker problem reward as $R_{\rm attacker}$.

    	We know that the value of transaction $tx$, i.e.,  $v_{\rm tx}$ is less than $k R_{\rm problem}$, since we have  $v_{\rm tx}\leq V \leq k R_{\rm problem}$ according to the transaction volume constraint.
    	Also, according to the MAV chain rule introduced in Section \ref{sec:mode:chain}, if the attacker wants to fork the main chain and revert the transaction, the aggregated value of the attacker's new branch should be larger than the current main chain, i.e., $R_{\rm attacker} \geq R_{\rm problem}$. 
		
    	Due to the fee-burning mechanism, it at least costs the attacker $kR_{\rm attacker}$ to solve the constructed problem, and the gained value will be $v_{\rm tx}$ if the attack succeeds.
    	Thus, the attacker's profit $u_{\rm double-spending}$ by successfully conducting the double-spending attack is at most
    	\begin{align*}
    	  	u_{\rm double-spending} =& v_{\rm tx} - k R_{\rm attacker}\\
    		\leq & V - k R_{\rm problem} < 0.
    	\end{align*}
    	
    	Note that if the attacker wants to revert an older transaction included in a previous block, we can similarly show that it costs the attacker even more money to construct enough blocks. Thus, double spending by constructing attacker problems is not profitable.
	\end{proof}

	The result in Theorem \ref{thm:constructing} shows that the system can naturally prevent malicious users from gaining extra profit through double-spending in the block associated with the constructed problem.

	\textbf{Remark.} (\emph{Value of on-chain data}) Consider a special case where a user proposes a transaction that sends money from and to the same account. 
	In this case, the purpose of the transaction is usually to store (on chain) the messages/data that are appended to the transaction.
	For such a data storage request, the user still needs to decide the value of the transaction. Though the money still transfers back to the same user, a higher transaction value commits a higher transaction fee, since it occupies more ``space'' in the block, due to the transaction volume constraint.
	Thus, when deciding the value of the transaction, the user should carefully evaluate the value of the appended data to herself/himself, as well as the value to potential attackers. 
	This is because the transaction may be at risk of being double spent if the claimed transaction value is too low and fails to represent the value of the stored data.
	From this point of view, the data stored on chain will be properly evaluated by the corresponding transaction value by concerned users.
	Moreover, the data is protected by the blockchain as an insured object with the value of the transaction amount, and the transaction fee is the insurance fee for the data storage.

	
    
    
    	


	\subsubsection{Fee grabbing}  
	
	Other than double spending a transaction, a malicious user may also attempt to earn extra transaction fee.
	Consider that a malicious user constructs a user problem to grab the transaction fee from the generation of the next block, and denote $u_{\rm fee-grabbing}$ as the attacker's payoff from this fee grabbing attack. We have the following result.
	
	\begin{theorem}
          An attacker's payoff from the fee grabbing attack by constructing a problem with known solution is negative, i.e., $u_{\rm fee-grabbing}<0$.
    \end{theorem}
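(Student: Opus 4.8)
The plan is to follow essentially the same accounting argument as in Theorem~\ref{thm:constructing}, adapted to the case where the attacker does not aim to revert any transaction but only to collect the transaction fees of a freshly mined block by solving a self-constructed problem with a known solution. First I would set up the attacker's ledger: let $R_{\rm attacker}$ be the reward the attacker attaches to the constructed problem, let $V$ be the total value of transactions the attacker packs into the block it produces, and let $F$ be the corresponding transaction fee it collects. The key observation is that the reward the attacker posts on its own problem is money it pays out of its own pocket; by the reward-burning mechanism (Section~\ref{burnt-fee}), a fraction $k$ of $R_{\rm attacker}$ is burnt and irrecoverable, while the remaining $(1-k)R_{\rm attacker}$ flows back to the attacker as the problem reward of the block it mines (since the attacker is simultaneously the proposer and the solver). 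Hence the net cost of staging the problem is exactly $kR_{\rm attacker}$.

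Next I would bound the benefit. The only genuine gain is the transaction fee $F$ from the packed transactions. By the Transaction Volume Constraint (Constraint~\ref{tx-volume}), the total transaction value in the attacker's block satisfies $V < k R_{\rm attacker}$. Since the transaction fee any honest user attaches is at most the transaction value it is willing to move — or, more to the point, since fees the attacker ``grabs'' from other users' transactions would have been collectable anyway by whichever miner mined the block, and fees on the attacker's own padding transactions are money paid from and to the attacker and thus net zero — the marginal fee revenue attributable to the attack is bounded by $F \le V < k R_{\rm attacker}$. Putting the two sides together gives
\begin{align*}
  u_{\rm fee-grabbing} =&\ F - k R_{\rm attacker}\\
  \le &\ V - k R_{\rm attacker} < 0,
\end{align*}
which is the claim.

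I would then add a short remark handling the subtlety that, absent the attack, the attacker might have had \emph{some} chance of mining the next block legitimately and earning those fees, so one should really compare against that baseline; but since solving a constructed problem still costs the burnt fraction $kR_{\rm attacker}$ and yields no advantage in block-production probability beyond what PoCW~(\ref{eq:pos_block}) already grants for a problem of that reward, the constructed-problem route is strictly dominated. The main obstacle I anticipate is not the inequality itself, which is immediate from Constraint~\ref{tx-volume} and the burning rule, but pinning down precisely what counts as the attacker's ``payoff'' — in particular making rigorous the claim that padding transactions between the attacker's own addresses contribute zero net fee and that fees on third-party transactions are not a gain unique to the attack — so that the quantity $F$ appearing above is legitimately upper-bounded by $V$. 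Once that modeling point is fixed, the rest is the one-line computation above.
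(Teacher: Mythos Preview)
Your proposal is correct and follows essentially the same accounting argument as the paper: bound the collected fee by the total transaction value, invoke Constraint~\ref{tx-volume} to get $F \le V < kR_{\rm attacker}$, and subtract the burnt portion $kR_{\rm attacker}$ to conclude $u_{\rm fee-grabbing}<0$. Your additional remarks on the payoff baseline and on padding transactions go slightly beyond what the paper spells out, but they only sharpen the same argument rather than change it.
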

    
    \begin{proof}
	Note that the transaction fee is less than the corresponding transaction value. Assume that the attacker constructs an attacker problem with problem reward $R_{\rm attacker}$. According to the transaction volume constraint, the total transaction value in this block is lower than $kR_{\rm attacker}$, so is the total transaction fee $F$, since $F\leq V$. The profit for fee grabbing is 	
	\begin{equation*}
	\hspace{-0.2in}	u_{\rm fee-grabbing} = F - k R_{\rm problem} \leq V - k R_{\rm problem} < 0.
	\end{equation*}
    Thus, it is not profitable for the attacker to gain extra profit from fee grabbing by constructing a user problem with known solution.
    \end{proof}

\textbf{Remark:} The transaction volume constraint is a strict and rigorous mathematical guarantee for security and may lead to low throughput of our system. In practice, we can adjust the constraints for more plausible throughput without triggering the security issues in different ways. Here are some possible alternatives. For example, similar to the six-block-confirmation mechanism in Bitcoin, the user can confirm the included transaction only after the aggregated value is greater than some specified threshold, e.g., $\frac{2V}{k}$, which can increase the cost of the double-spending attack. 
Besides, adding a block-finalizing time in consensus could help 
prevent the fork caused by the double-spending attack.

\subsection{Defend Against Malicious Miners}

Next, we investigate the malicious behaviors of miners.
The major difference between our proposed system and classic PoW systems is the proof for a miner to generate a block. In PoW, the proof is a number (nonce) that is valid only on the corresponding miner address. But in \name, any miner that provides a solution to the user problem is eligible for block generation. 
Thus, a natural concern is that what happens if a malicious miner copies the published solution and generate blocks based the stolen solution?

Note that that once a {\tt{RevealingTx}} is broadcast, the attacker can observe the solution and steal it. The attacker can copy and claim the solution on a deliberately created fork after the solution is revealed. If the malicious fork turns out to be the main chain, the attacker steals the solution. 

	\begin{figure}[!ht]
		\centering
		\includegraphics[width=3.3in]{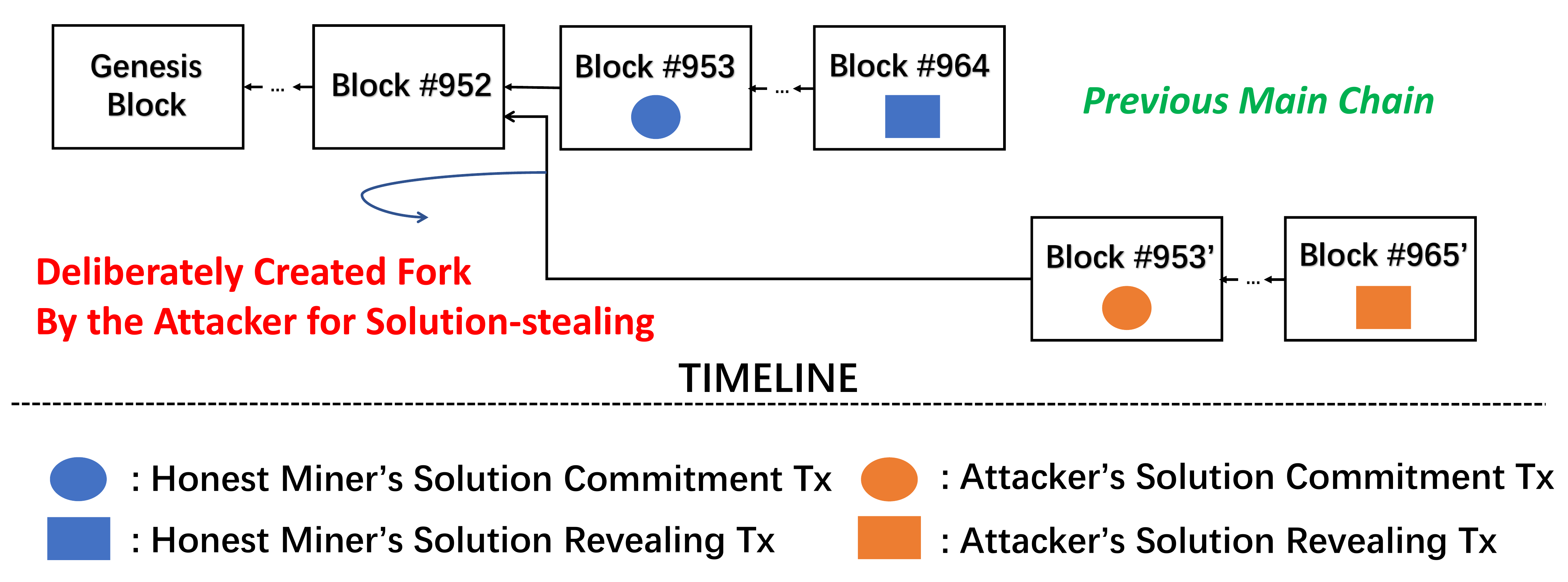}
		\caption{Demonstration of the solution-stealing attack.} 
		\label{fig:solution_stealing}
	\end{figure}

	Figure \ref{fig:solution_stealing} demonstrates the solution-stealing attack with $T_{\rm min}=10$. On the main chain, an honest miner solves a user problem and initiates the corresponding {\tt{CommitmentTx}} which is packed in Block \#953. After waiting for $T_{\rm min}=10$ blocks, the miner initiates the corresponding {\tt{RevealingTx}} which is packed in Block \#964. 
	Afterwards, the attacker observes the solution from the corresponding {\tt{RevealingTx}} and tries to steal it as follows. The attacker deliberately creates a fork starting from Block \#953', following the Block \#952. Block \#953' contains the {\tt{CommitmentTx}} associated with the address of the attacker. Then, after managing to append $T_{\rm min}+1=11$ blocks on the attacker's branch, the attacker can reveal the solution in a transaction contained in Block \#965'. In this case, the deliberately created fork becomes the main chain and the solution is successfully stolen by the attacker.
	
	However, thanks to the \textbf{Commit-Reveal Solution Releasing Procedure}, our system is resistant to the attack.
    Note that the attacker can propose a {\tt{CommitmentTx}} on the fork only after observing the {\tt{RevealingTx}} on the main chain. We know that {\tt{RevealingTx}} is at least $T_{min}$ blocks after the corresponding {\tt{CommitmentTx}}. Thus, when the attacker tries to steal the solution, the {\tt{CommitmentTx}} on the fork is at least $T_{\rm min}$ blocks behind the main chain.
    Then, the attack can succeed only if the attacker can generate a fork which is at least $T_{\rm min} + 1$ blocks longer than the main chain in a very short period.

    To generate these blocks on the attacker's private chain, the attacker needs to solve $T_{\rm min} + 1$ problems, each being either a system problem or a user problem. If the attacker generates blocks by solving system problems, the security threat is the same as launching a 51\% attack in the system, which will be discussed in Section \ref{sec:51attack}. On the other hand, if the attacker generates blocks by solving user problems or even constructing user problems, the block generation process will be delayed by the PoCW protocol, which traps the attacker at a disadvantage in the race between the main chain and the attacker's chain.  In this way, the probability that the attacker's chain catches up with the current main chain with $T_{\rm min}$ blocks in the system drastically decays with the value of $T_{\rm min}$, which means the attack hardly succeeds. This is similar to the six-block-confirmation mechanism of Bitcoin. Furthermore, note that the solutions of system problems on the main chain can never be stolen because they are associated with the solver's address. Since these solutions also contribute to the aggregated value of the main chain, they also enhance the system's capability against the solution-stealing attack.

    A larger value of $T_{\rm min}$  provides a higher level of security, but affects the users' quality of experiences, as it also indicates an increase in the latency of the solution revealing procedure. 
    Therefore, depending on the required security level, there is a subtle trade-off between security and users' satisfaction in the design of the system parameter $T_{\rm min}$.

\subsection{Better Resistance to the 51\% Attack}\label{sec:51attack}

	The 51\% attack indicates a situation when a miner possesses more than half of the computational resource in the system. An attacker with more than 50\% computing power can conduct the double-spending attack by solving problems and reverting transactions. The 51\% attack is mostly considered to be hypothetical since such situation is somewhat unrealistic and unstable in the long term \cite{aponte202151}. However, in many cryptocurrencies, it only costs the attacker thousands of US dollars to buy computing services and boost one’s computing power to more than half of the system for a short period of time \cite{crypto51}.
	This implies that once a high-value transaction is processed in these cryptocurrencies, conducting an immediate 51\% attack to double spend the transaction could be profitable.

    Compared with most PoW-based cryptocurrencies, we argue that our system is more robust against the 51\% attack, in that our system can defend the short-term 51\% attack. The key reasons are two folds.
    
    The first reason is that when the miner solves a user problem, the block will not be published immediately according to the PoCW rule. In traditional PoW, attackers can buy computing power and generate numbers of new blocks in a short time to fork the main chain. But in PoCW, even if the attacker buys computing power to solve some user problems in a short time, the blocks cannot be published immediately, as it is required to wait for PoCW block generation process. While the attacker is waiting for the block generation, the remaining honest miners are still mining and aggregating value on the main chain, offsetting the short time computation advantage of the attacker.
	
	The second reason is that in our system, if the attacker attempts to revert a high-value transaction, it is mandatory to solve a problem with a high reward, i.e., the problem that requires more computational resources. 
	This is different from traditional PoW where the computational resource needed to mine a block is relatively fixed and irrelevant to the value of the included transactions. 
	Therefore, one is not likely to revert a high-value transaction with a short-term boost in computational resources.

\section{System Implementation }\label{sec:experiment}

\begin{figure*}[!th]
	\centering
	\subfigure[Block generation rate.]{
	\includegraphics[width=0.30\linewidth]{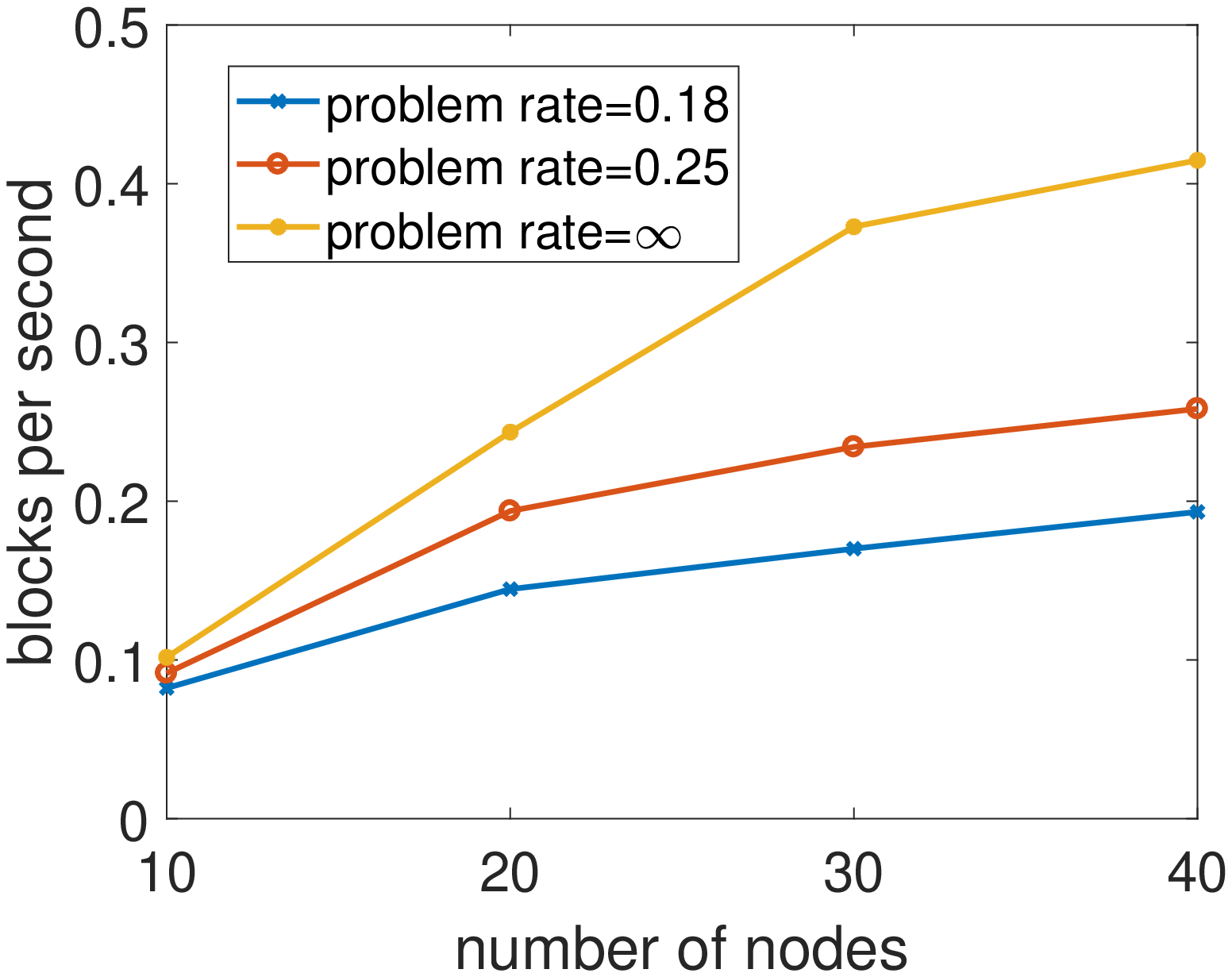}
	\label{fig:block_rate}}
	\centering
	\subfigure[Problem process rate. ]{\includegraphics[width=0.30\linewidth]{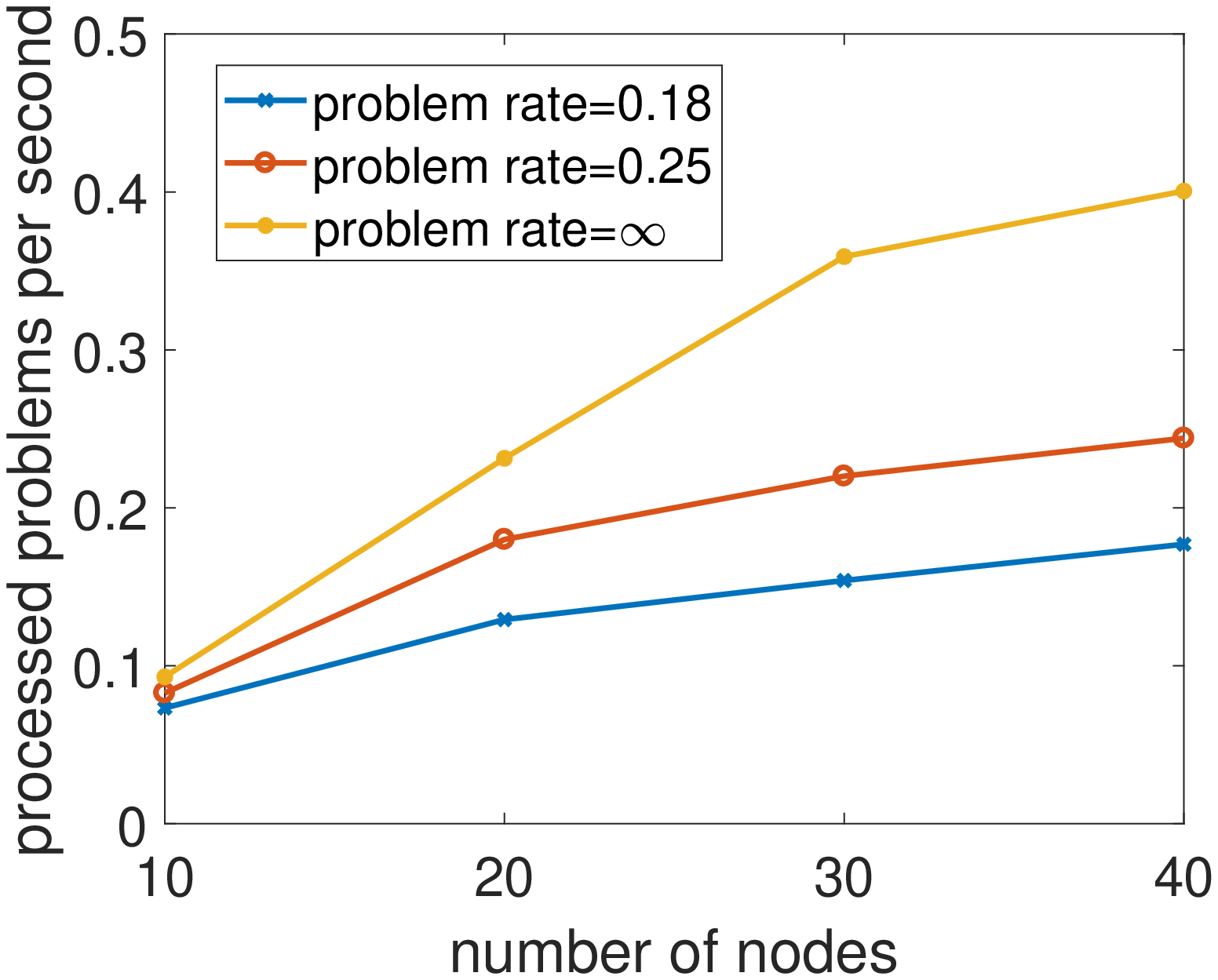}\label{fig:pb_rate}}
	\hfil
	\centering
	\subfigure[Transaction process rate.]{
		\includegraphics[width=0.30\linewidth]{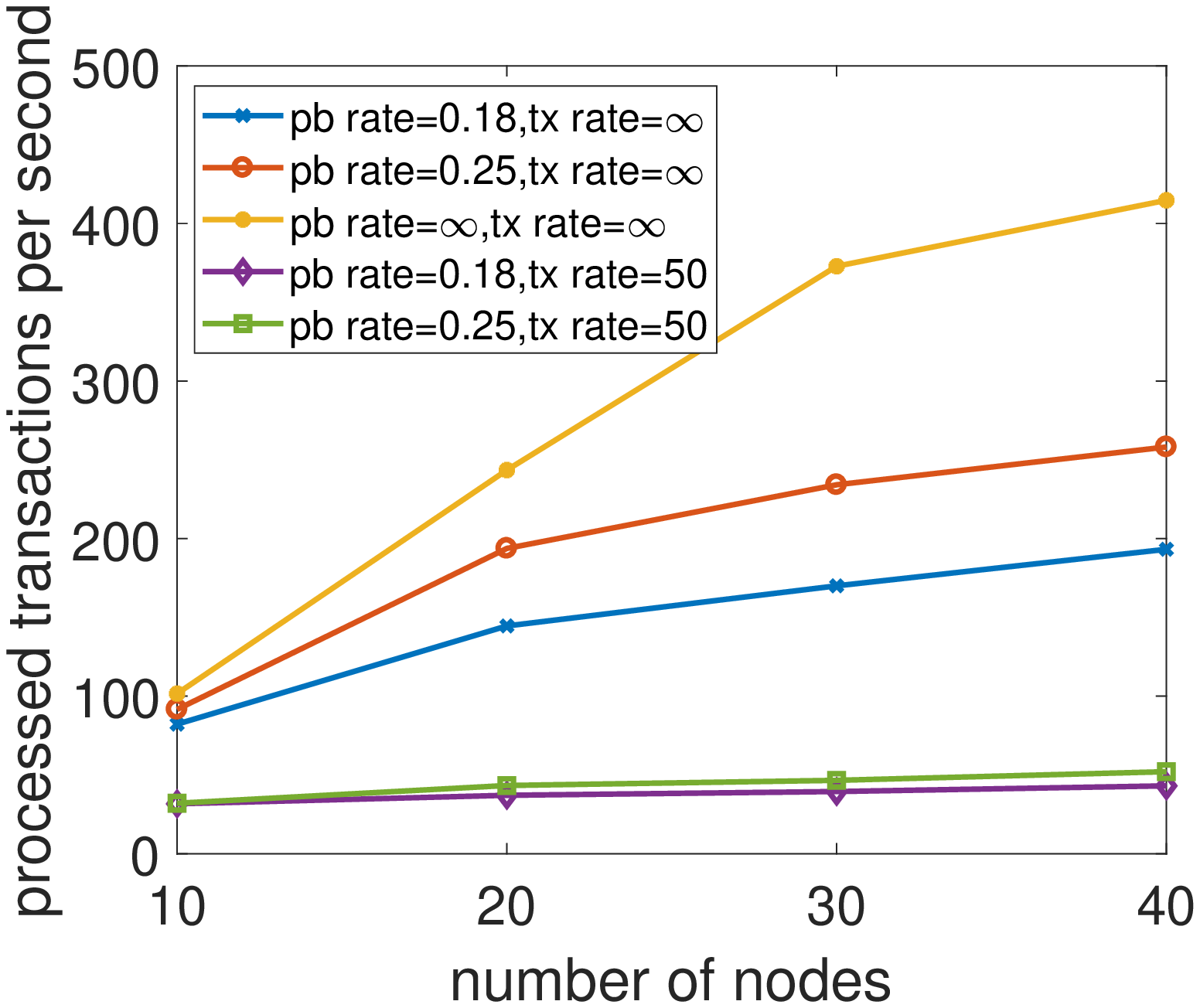}\label{fig:tx_rate}}

	\subfigure[Effective computational resource utilization. ]{\includegraphics[width=0.30\linewidth]{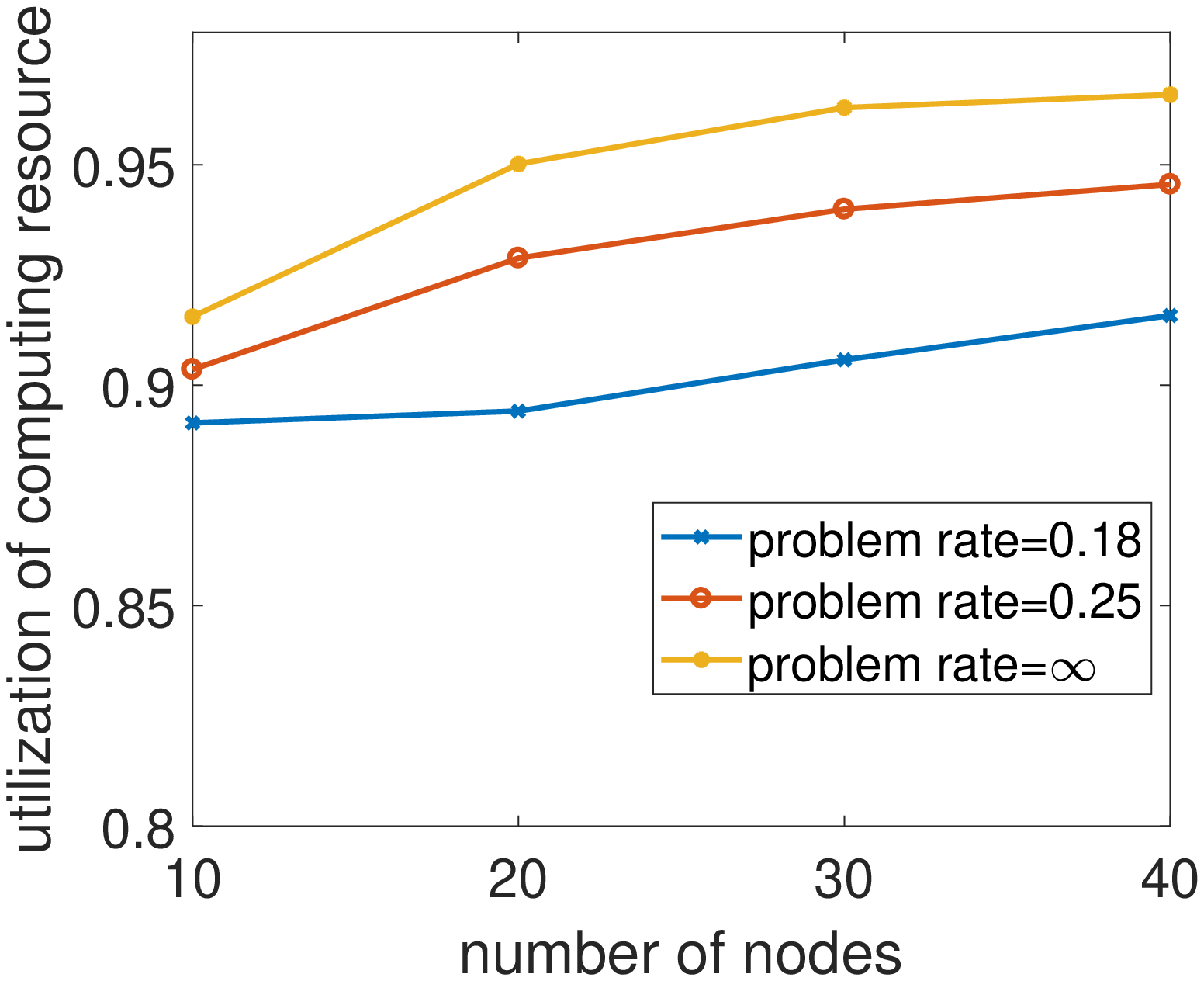}
	\label{fig:utilize}}
	\centering
	\subfigure[Circulating supply of token. ]{
		\includegraphics[width=0.30\linewidth]{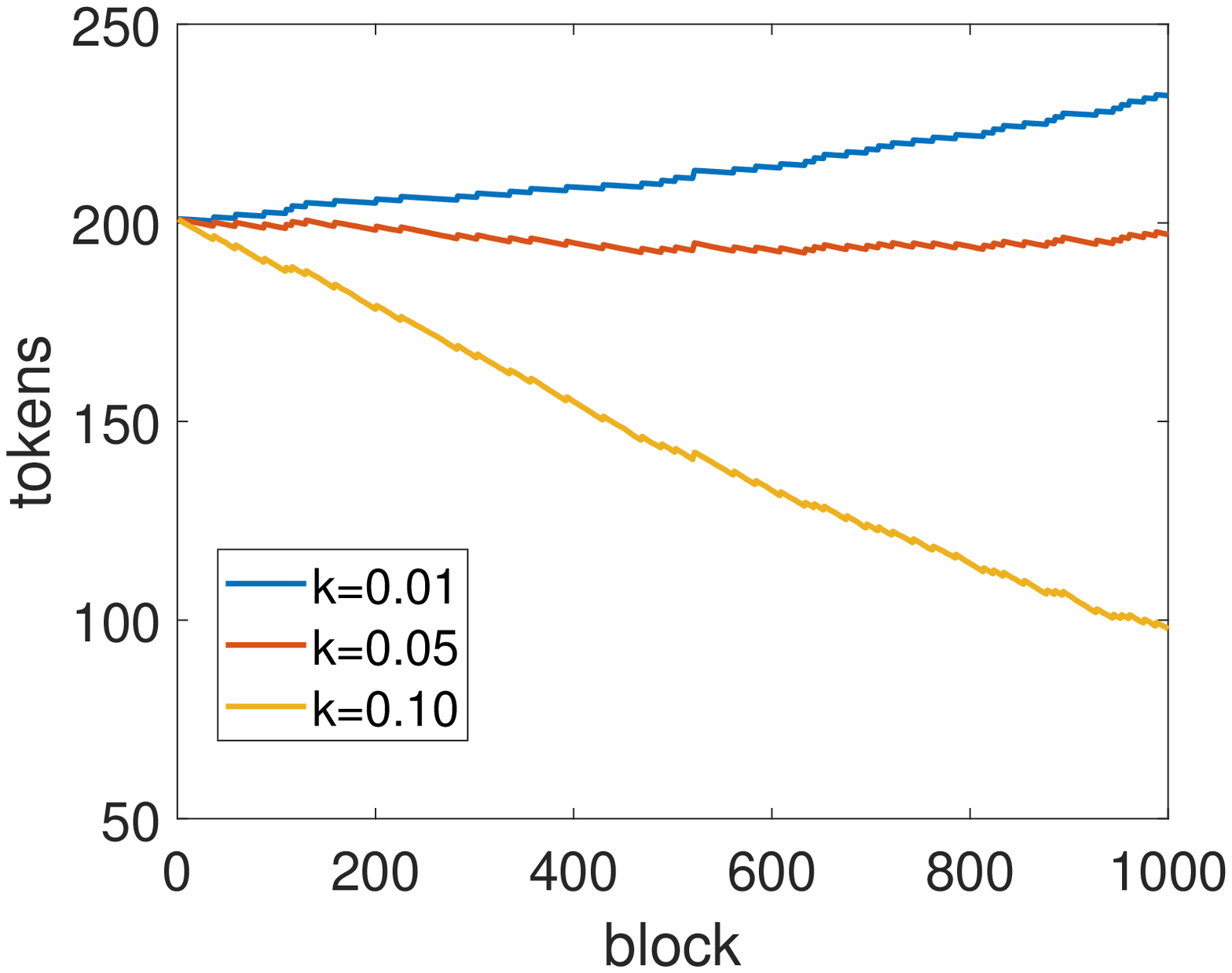}
		\label{fig:circulate}}
	\hfil
	\centering
	\subfigure[Zoom-in view of (e) on the first 100 blocks.] {\includegraphics[width=0.30\linewidth]{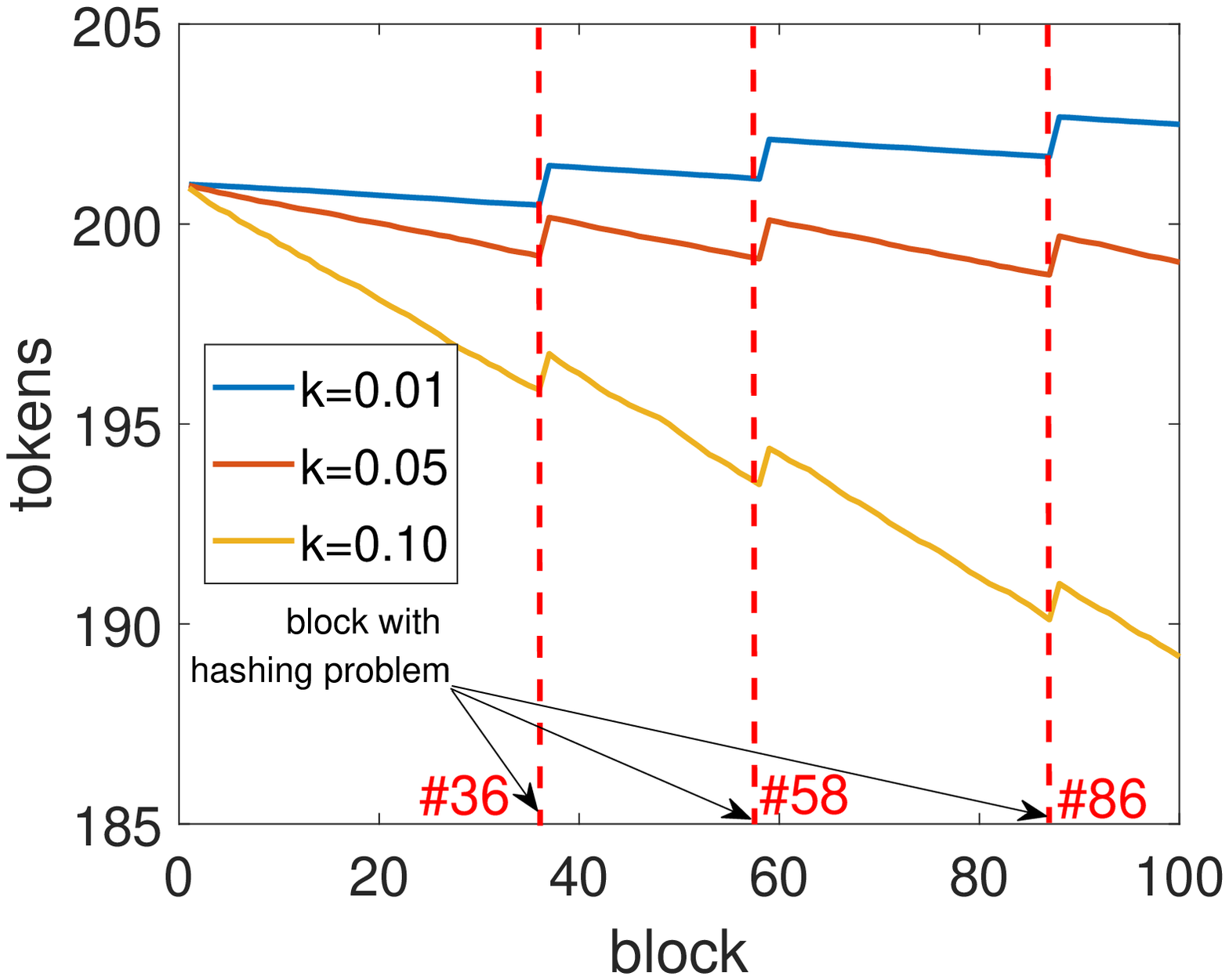}	\label{fig:zoom}}
	
	\vspace{-0.15in}
	\caption{Results of experiment.}
	\label{fig:exp:results}
	\vspace{-0.10in}
\end{figure*}

\subsection{Resource and Environment Configuration}

We have built a decentralized prototype to demonstrate the performance and properties of our proposed system.
The system is built in Golang, and we use database BoltDB (v1.3.1) to store blocks and transactions. 
All our experiments are performed on a server with one AMD Ryzen 5950X (16 cores, 32HT) CPU and 32GB of memory, with up to 40 distributed docker containers running Ubuntu 20.04.3 LTS are deployed to simulate the decentralized system .

%

In the experiments, we adopt the star network to organize the decentralized nodes (i.e., containers), where every node connects to the hub of the network, i.e., a router.
The router in the star network routes transactions, problems, and blocks among these decentralized nodes. Such structure allows the nodes to simulate the peer-to-peer communication. The router also collects the data in the network to form our experimental results.

\subsection{Implementation Details}
In this section, we introduce the detailed implementation.

\textbf{Problem Paradigm: Constraint Satisfaction Problem (CSP).}
In this paper, we implement the classic Constraint Satisfaction Problem (CSP), a general optimization framework that captures many practical scenarios \cite{prosser1993hybrid}, as a demonstration of user problems. CSP allows solutions of different qualities, by satisfying different numbers of constraints.

\textbf{Problems.} Three different types of user problems are implemented in our experiments: graph coloring \cite{jensen2011graph}, Sudoku \cite{simonis2005sudoku}, and zero-one programming \cite{padberg1975note}. These problems are classic representatives of CSPs. We adopt the exhaustive random search, a general technique \cite{heule2017science}, to solve these problems. Besides, the problem difficulty is adjusted so that each problem can be solved within around 10 seconds with 10 decentralized nodes in the network.

\textbf{Users.}  We simulate users' problem proposals by letting the router generate and broadcast user problems within the star network. 
The router will randomly generate the aforementioned three kinds of problems with a ratio of 1:1:1 to simulate heterogeneous user demand. 
Meanwhile, the router also generates and broadcasts transactions to simulate the transaction circulation process.

\textbf{Miners.}  
 Each miner independently runs on a docker container, with predetermined exclusive computing resources. Miners communicate with each other within the star network through the router.  Each miner randomly and independently selects user problems. In our experiment, miners are honest and strictly follow the chain rule. 

\subsection{Performance Results}
In this section, we investigate the performance of the proposed system from two crucial perspectives, namely, the system throughput and the resource utilization.

\textbf{Throughput.}
We conduct experiments with different network sizes under different problem generation rates and transaction generation rates, and the results are shown in Figure \ref{fig:exp:results}. 

Figure \ref{fig:block_rate} demonstrates the system performance with the metric of block generation rate measured by the number of blocks generated on the main chain per second.
It shows that, with more miners and more computation power, the block generation rate will monotonically increase, and it implies good parallelism of our system. 
Figure \ref{fig:pb_rate} shows the problem process rate, which is defined as the number of user problems being solved per second. 
When the problem generation rate is low, the system performance is limited by the problem generation rate. Therefore, the number of processed user problems per second increases with problem generation rate.
Figure \ref{fig:tx_rate} shows the transaction process rate, 
which is defined as the number of transactions published on the main chain per second. 
It shows that, though both problem generation rate and transaction generation rate affect the transaction processing rate (TPS), the dominant factor affecting TPS is the former one. Also, it demonstrates plausible scalability and throughput of the system.

\textbf{Resource utilization.}
Define the computing resource utility as the proportion of computing power for solving user problems.
Figure \ref{fig:utilize} shows the computing resource utilization in the system, from which we can find that the utilization rates increase when more miners join our system, which shows the great scalability and environmental friendliness of the designed system. 
Also, we can find that the utilization increases with problem generation rate. It is because  when the problem generation rate increases, miners can solve more user problems instead of meaningless system problems. 



\subsection{Inspecting Circulating Supply of Token}\label{sec:burn}

In Section \ref{sec:security}, we have shown that the reward-burning mechanism is crucial for the system security. In this section, we further investigate the reward-burning mechanism and analyze its impact on the circulating supply of the cryptocurrency tokens in the system.

We record and plot the total tokens in the system during the generation of the first 1000 blocks in our experiment with 40 nodes and problem generation rate 0.2. The results are shown in Figure \ref{fig:circulate} and Figure \ref{fig:zoom}.
As shown in Figure \ref{fig:circulate}, we can tell that, when $k$ is smaller than 0.05, the supply of tokens in the system increases.
When $k$ is larger than 0.05, the token supply decreases.
Specially, at the tipping point $k=0.05$, tokens remain relatively stable.
Figure \ref{fig:zoom} shows the zoom-in view of the first 100 blocks in Figure \ref{fig:circulate}.
From Figure \ref{fig:zoom}, we can find that after several blocks with user problems, there will be a block with hashing problem that includes some solution commitment transactions. The block with hashing problem occurs when there are no unsolved user problems, which happens occasionally.
As mentioned above,  the total cryptocurrency in circulation decreases when miners mine a block with a user problem, and the total cryptocurrency in circulation increases when miners mine a block with a hashing problem. 
Therefore, from Figure \ref{fig:zoom},  at the blocks with system problems, i.e.,  block \#36, block \#58 and block \# 86,  we can observe token increase, while the remaining blocks with user problems lead to token decrease.

\textbf{Remark (Economic observation).}  The experiment on the different token changes after blocks with user problems and blocks with system problems sheds light on self-adaptive elasticity on toke supply.
When there are insufficient user problems in the system, the cryptocurrency tokens in the system will gradually increase, since miners are solving system problems. Such issuance strategy mimics the inflation of currency, which devaluates the tokens and tends to lower the cost for users to propose a problem.
On the contrary, when there are too many user problems in the system, the token supply decreases, which decreases the token liquidity and suppresses user demand. 
However, the total token supply of the system will not decrease to 0 because the system has its self-balancing mechanism as described as follows. 
When the total tokens in the system decrease, users will reduce their problem fee reward due to the token tightening. However, the problem reward for the system problem remains stable. When the reward for user problems is no longer attractive, miners will prefer to solve the system problems. As analyzed above, the circulating supply of token increases when miners mine a block with a system problem. Thus, the token supply can fluctuate within a feasible range.
In summary, the designed reward-buning mechanism can help regulate the token supply in the system and keep the ecosystem healthy in long term.

\section{Further Discussions}\label{sec:discussion}
In this section, we discuss some possible improvements on the system implementation.

\vspace{-0.1in}
\subsection{Problem Decomposition}
A problem with high computational complexity takes a long time to solve has several concerns.
From the user's perspective, the user who proposes a very difficult problem may need to wait a long time for the solution, which leads to a bad user experience. 
From the miner's perspective, it is risky to solve a ``big'' problem, though the reward is usually high for these kinds of problems, only one miner can eventually claim the reward. 

A practical solution to address this challenge is to let users decompose the complex problem into several sub-problems with low computational complexity. 
Such decomposition is common and practical.
For example, if a user wants to solve a difficult CSP problem, the user can divide the domain of the CSP problem into several partitions, with each partition being the sub-domain of the sub-problem.  Miners in the system can pick out different sub-problems and solve them in parallel. 
Once all the sub-problems are solved, the user can obtain the final solution of the original problem. 
Moreover, the user can decompose the problem into several sub-problems with redundancy to further improve the service quality. Similar techniques are used in the coded machine learning \cite{ding2021optimal}.

\subsection{DAG-based Paradigm}

Directed Acyclic Graph (DAG) allows multiple blocks to be appended simultaneously, which is suitable for our system.
DAG structure supports high concurrency of solution appending, and can greatly boost the system performance.
As discussed in Section \ref{sec:miningpro}, when the temporary fork happens on the blocks that solve different user problems, the orphan blocks can be appended in the main chain, which is similar to the Inclusive protocol \cite{lewenberg2015inclusive}, where the transactions in the orphan blocks can be included in the main chain.

\subsection{Problem Selection Strategy}

Due to the high concurrency and the network delay, the miners usually do not have complete information of the updated problem pool and the other miners' problem selection. Thus, miners may try to solve the same problem concurrently, generating redundant blocks in the system. The collision in problem selection wastes the miner's computing power and may degrade the system performance. 
Similar problems also occur in the current DAG-based blockchain systems, where the miners may include the same transactions in the concurrent blocks \cite{lewenberg2015inclusive}. The technologies that help to resolve the transaction inclusion collision in DAG-based blockchain such as \cite{chen2022tips} can also be adopted here to resolve the problem selection collision in the system.
Besides, the collision will degrade the miner's revenue. Therefore, the rational miners are motivated to take the equilibrium selection strategy to avoid the collision \cite{lewenberg2015inclusive}. 

\section{Conclusion}\label{sec:conclusion}

In this paper, we design a blockchain-based permissionless and decentralized storage and computing paradigm. 
With the proposed Proof of Crowdsourcing Work mechanism, we show the energy efficiency and security of the system.
We also implement the system with up to 40 decentralized nodes, and the experiment results demonstrate the excellent efficiency and performance of our proposed paradigm.


\newpage

\bibliographystyle{ACM-Reference-Format}
\bibliography{reference}


\begin{thebibliography}{50}


\ifx \showCODEN    \undefined \def \showCODEN     #1{\unskip}     \fi
\ifx \showDOI      \undefined \def \showDOI       #1{#1}\fi
\ifx \showISBNx    \undefined \def \showISBNx     #1{\unskip}     \fi
\ifx \showISBNxiii \undefined \def \showISBNxiii  #1{\unskip}     \fi
\ifx \showISSN     \undefined \def \showISSN      #1{\unskip}     \fi
\ifx \showLCCN     \undefined \def \showLCCN      #1{\unskip}     \fi
\ifx \shownote     \undefined \def \shownote      #1{#1}          \fi
\ifx \showarticletitle \undefined \def \showarticletitle #1{#1}   \fi
\ifx \showURL      \undefined \def \showURL       {\relax}        \fi
\providecommand\bibfield[2]{#2}
\providecommand\bibinfo[2]{#2}
\providecommand\natexlab[1]{#1}
\providecommand\showeprint[2][]{arXiv:#2}

\bibitem[\protect\citeauthoryear{??}{cry}{2021}]%
        {crypto51}
 \bibinfo{year}{2021}\natexlab{}.
\newblock \bibinfo{title}{PoW 51\% Attack Cost}.
\newblock \bibinfo{howpublished}{\url{ https://www.crypto51.app/}}.
\newblock


\bibitem[\protect\citeauthoryear{Amiri, Dugu{\'e}p{\'e}roux, Allard, Agrawal,
  and El~Abbadi}{Amiri et~al\mbox{.}}{2021}]%
        {amiri2021separ}
\bibfield{author}{\bibinfo{person}{Mohammad~Javad Amiri},
  \bibinfo{person}{Joris Dugu{\'e}p{\'e}roux}, \bibinfo{person}{Tristan
  Allard}, \bibinfo{person}{Divyakant Agrawal}, {and} \bibinfo{person}{Amr
  El~Abbadi}.} \bibinfo{year}{2021}\natexlab{}.
\newblock \showarticletitle{Separ: Towards Regulating Future of Work
  Multi-Platform Crowdworking Environments with Privacy Guarantees}. In
  \bibinfo{booktitle}{\emph{Proceedings of the Web Conference 2021}}.
  \bibinfo{pages}{1891--1903}.
\newblock


\bibitem[\protect\citeauthoryear{Anderson}{Anderson}{2004}]%
        {anderson2004boinc}
\bibfield{author}{\bibinfo{person}{David~P Anderson}.}
  \bibinfo{year}{2004}\natexlab{}.
\newblock \showarticletitle{Boinc: A system for public-resource computing and
  storage}. In \bibinfo{booktitle}{\emph{Fifth IEEE/ACM international workshop
  on grid computing}}. IEEE, \bibinfo{pages}{4--10}.
\newblock


\bibitem[\protect\citeauthoryear{Aponte-Novoa, Orozco, Villanueva-Polanco, and
  Wightman}{Aponte-Novoa et~al\mbox{.}}{2021}]%
        {aponte202151}
\bibfield{author}{\bibinfo{person}{Fredy~Andres Aponte-Novoa},
  \bibinfo{person}{Ana Lucila~Sandoval Orozco}, \bibinfo{person}{Ricardo
  Villanueva-Polanco}, {and} \bibinfo{person}{Pedro Wightman}.}
  \bibinfo{year}{2021}\natexlab{}.
\newblock \showarticletitle{The 51\% attack on blockchains: A mining behavior
  study}.
\newblock \bibinfo{journal}{\emph{IEEE Access}}  \bibinfo{volume}{9}
  (\bibinfo{year}{2021}), \bibinfo{pages}{140549--140564}.
\newblock


\bibitem[\protect\citeauthoryear{Bahri and Girdzijauskas}{Bahri and
  Girdzijauskas}{2018}]%
        {bahri2018trust}
\bibfield{author}{\bibinfo{person}{Leila Bahri} {and} \bibinfo{person}{Sarunas
  Girdzijauskas}.} \bibinfo{year}{2018}\natexlab{}.
\newblock \showarticletitle{When trust saves energy: a reference framework for
  proof of trust (PoT) blockchains}. In \bibinfo{booktitle}{\emph{Companion
  Proceedings of the The Web Conference 2018}}. \bibinfo{pages}{1165--1169}.
\newblock


\bibitem[\protect\citeauthoryear{Baldominos and Saez}{Baldominos and
  Saez}{2019}]%
        {baldominos2019coin}
\bibfield{author}{\bibinfo{person}{Alejandro Baldominos} {and}
  \bibinfo{person}{Yago Saez}.} \bibinfo{year}{2019}\natexlab{}.
\newblock \showarticletitle{Coin. AI: A proof-of-useful-work scheme for
  blockchain-based distributed deep learning}.
\newblock \bibinfo{journal}{\emph{Entropy}} \bibinfo{volume}{21},
  \bibinfo{number}{8} (\bibinfo{year}{2019}), \bibinfo{pages}{723}.
\newblock


\bibitem[\protect\citeauthoryear{Ball, Rosen, Sabin, and Vasudevan}{Ball
  et~al\mbox{.}}{2017}]%
        {ball2017proofs}
\bibfield{author}{\bibinfo{person}{Marshall Ball}, \bibinfo{person}{Alon
  Rosen}, \bibinfo{person}{Manuel Sabin}, {and}
  \bibinfo{person}{Prashant~Nalini Vasudevan}.}
  \bibinfo{year}{2017}\natexlab{}.
\newblock \showarticletitle{Proofs of Useful Work.}
\newblock \bibinfo{journal}{\emph{IACR Cryptol. ePrint Arch.}}
  \bibinfo{volume}{2017} (\bibinfo{year}{2017}), \bibinfo{pages}{203}.
\newblock


\bibitem[\protect\citeauthoryear{Ball, Rosen, Sabin, and Vasudevan}{Ball
  et~al\mbox{.}}{2018}]%
        {ball2018proofs}
\bibfield{author}{\bibinfo{person}{Marshall Ball}, \bibinfo{person}{Alon
  Rosen}, \bibinfo{person}{Manuel Sabin}, {and}
  \bibinfo{person}{Prashant~Nalini Vasudevan}.}
  \bibinfo{year}{2018}\natexlab{}.
\newblock \showarticletitle{Proofs of work from worst-case assumptions}. In
  \bibinfo{booktitle}{\emph{Annual International Cryptology Conference}}.
  Springer, \bibinfo{pages}{789--819}.
\newblock


\bibitem[\protect\citeauthoryear{Beck}{Beck}{2018}]%
        {beck2018beyond}
\bibfield{author}{\bibinfo{person}{Roman Beck}.}
  \bibinfo{year}{2018}\natexlab{}.
\newblock \showarticletitle{Beyond bitcoin: The rise of blockchain world}.
\newblock \bibinfo{journal}{\emph{Computer}} \bibinfo{volume}{51},
  \bibinfo{number}{2} (\bibinfo{year}{2018}), \bibinfo{pages}{54--58}.
\newblock


\bibitem[\protect\citeauthoryear{Bentov, Lee, Mizrahi, and Rosenfeld}{Bentov
  et~al\mbox{.}}{2014}]%
        {bentov2014proof}
\bibfield{author}{\bibinfo{person}{Iddo Bentov}, \bibinfo{person}{Charles Lee},
  \bibinfo{person}{Alex Mizrahi}, {and} \bibinfo{person}{Meni Rosenfeld}.}
  \bibinfo{year}{2014}\natexlab{}.
\newblock \showarticletitle{Proof of activity: Extending bitcoin's proof of
  work via proof of stake [extended abstract] y}.
\newblock \bibinfo{journal}{\emph{ACM SIGMETRICS Performance Evaluation
  Review}} \bibinfo{volume}{42}, \bibinfo{number}{3} (\bibinfo{year}{2014}),
  \bibinfo{pages}{34--37}.
\newblock


\bibitem[\protect\citeauthoryear{Brown-Cohen, Narayanan, Psomas, and
  Weinberg}{Brown-Cohen et~al\mbox{.}}{2019}]%
        {brown2019formal}
\bibfield{author}{\bibinfo{person}{Jonah Brown-Cohen}, \bibinfo{person}{Arvind
  Narayanan}, \bibinfo{person}{Alexandros Psomas}, {and}
  \bibinfo{person}{S~Matthew Weinberg}.} \bibinfo{year}{2019}\natexlab{}.
\newblock \showarticletitle{Formal barriers to longest-chain proof-of-stake
  protocols}. In \bibinfo{booktitle}{\emph{Proceedings of the 2019 ACM
  Conference on Economics and Computation}}. \bibinfo{pages}{459--473}.
\newblock


\bibitem[\protect\citeauthoryear{Buterin et~al\mbox{.}}{Buterin
  et~al\mbox{.}}{2014}]%
        {buterin2014next}
\bibfield{author}{\bibinfo{person}{Vitalik Buterin} {et~al\mbox{.}}}
  \bibinfo{year}{2014}\natexlab{}.
\newblock \showarticletitle{A next-generation smart contract and decentralized
  application platform}.
\newblock \bibinfo{journal}{\emph{white paper}} \bibinfo{volume}{3},
  \bibinfo{number}{37} (\bibinfo{year}{2014}).
\newblock


\bibitem[\protect\citeauthoryear{Buterin, Conner, Dudley, Slipper, Norden, and
  Bakhta}{Buterin et~al\mbox{.}}{2019}]%
        {buterin2019eip}
\bibfield{author}{\bibinfo{person}{Vitalik Buterin}, \bibinfo{person}{Eric
  Conner}, \bibinfo{person}{Rick Dudley}, \bibinfo{person}{Matthew Slipper},
  \bibinfo{person}{Ian Norden}, {and} \bibinfo{person}{Abdelhamid Bakhta}.}
  \bibinfo{year}{2019}\natexlab{}.
\newblock \showarticletitle{EIP-1559: Fee market change for ETH 1.0 chain}.
\newblock \bibinfo{journal}{\emph{Published online}} (\bibinfo{year}{2019}).
\newblock


\bibitem[\protect\citeauthoryear{Chen, Chen, and Fang}{Chen
  et~al\mbox{.}}{2022}]%
        {chen2022tips}
\bibfield{author}{\bibinfo{person}{Canhui Chen}, \bibinfo{person}{Xu Chen},
  {and} \bibinfo{person}{Zhixuan Fang}.} \bibinfo{year}{2022}\natexlab{}.
\newblock \showarticletitle{TIPS: Transaction Inclusion Protocol with Signaling
  in DAG-based Blockchain}.
\newblock \bibinfo{journal}{\emph{arXiv preprint arXiv:2207.04841}}
  (\bibinfo{year}{2022}).
\newblock


\bibitem[\protect\citeauthoryear{Chen, Xu, Shah, Gao, Lu, and Shi}{Chen
  et~al\mbox{.}}{2017}]%
        {chen2017security}
\bibfield{author}{\bibinfo{person}{Lin Chen}, \bibinfo{person}{Lei Xu},
  \bibinfo{person}{Nolan Shah}, \bibinfo{person}{Zhimin Gao},
  \bibinfo{person}{Yang Lu}, {and} \bibinfo{person}{Weidong Shi}.}
  \bibinfo{year}{2017}\natexlab{}.
\newblock \showarticletitle{On security analysis of proof-of-elapsed-time
  (poet)}. In \bibinfo{booktitle}{\emph{International Symposium on
  Stabilization, Safety, and Security of Distributed Systems}}. Springer,
  \bibinfo{pages}{282--297}.
\newblock


\bibitem[\protect\citeauthoryear{Costan and Devadas}{Costan and
  Devadas}{2016}]%
        {costan2016intel}
\bibfield{author}{\bibinfo{person}{Victor Costan} {and}
  \bibinfo{person}{Srinivas Devadas}.} \bibinfo{year}{2016}\natexlab{}.
\newblock \showarticletitle{Intel sgx explained.}
\newblock \bibinfo{journal}{\emph{IACR Cryptol. ePrint Arch.}}
  \bibinfo{volume}{2016}, \bibinfo{number}{86} (\bibinfo{year}{2016}),
  \bibinfo{pages}{1--118}.
\newblock


\bibitem[\protect\citeauthoryear{De~Vries}{De~Vries}{2018}]%
        {de2018bitcoin}
\bibfield{author}{\bibinfo{person}{Alex De~Vries}.}
  \bibinfo{year}{2018}\natexlab{}.
\newblock \showarticletitle{Bitcoin's growing energy problem}.
\newblock \bibinfo{journal}{\emph{Joule}} \bibinfo{volume}{2},
  \bibinfo{number}{5} (\bibinfo{year}{2018}), \bibinfo{pages}{801--805}.
\newblock


\bibitem[\protect\citeauthoryear{Ding, Fang, Duan, and Huang}{Ding
  et~al\mbox{.}}{2021}]%
        {ding2021optimal}
\bibfield{author}{\bibinfo{person}{Ningning Ding}, \bibinfo{person}{Zhixuan
  Fang}, \bibinfo{person}{Lingjie Duan}, {and} \bibinfo{person}{Jianwei
  Huang}.} \bibinfo{year}{2021}\natexlab{}.
\newblock \showarticletitle{Optimal Incentive and Load Design for Distributed
  Coded Machine Learning}.
\newblock \bibinfo{journal}{\emph{IEEE Journal on Selected Areas in
  Communications}} \bibinfo{volume}{39}, \bibinfo{number}{7}
  (\bibinfo{year}{2021}), \bibinfo{pages}{2090--2104}.
\newblock


\bibitem[\protect\citeauthoryear{Ducr{\'e}e, Gravitt, Walshe, Bartling,
  Etzrodt, and Harrington}{Ducr{\'e}e et~al\mbox{.}}{2020}]%
        {ducree2020open}
\bibfield{author}{\bibinfo{person}{Jens Ducr{\'e}e}, \bibinfo{person}{Max
  Gravitt}, \bibinfo{person}{Ray Walshe}, \bibinfo{person}{S{\"o}nke Bartling},
  \bibinfo{person}{Martin Etzrodt}, {and} \bibinfo{person}{Tom{\'a}s
  Harrington}.} \bibinfo{year}{2020}\natexlab{}.
\newblock \showarticletitle{Open Platform Concept for Blockchain-Enabled
  Crowdsourcing of Technology Development and Supply Chains}.
\newblock \bibinfo{journal}{\emph{Frontiers in Blockchain}}
  \bibinfo{volume}{3} (\bibinfo{year}{2020}).
\newblock


\bibitem[\protect\citeauthoryear{Gallersd{\"o}rfer, Klaa{\ss}en, and
  Stoll}{Gallersd{\"o}rfer et~al\mbox{.}}{2020}]%
        {gallersdorfer2020energy}
\bibfield{author}{\bibinfo{person}{Ulrich Gallersd{\"o}rfer},
  \bibinfo{person}{Lena Klaa{\ss}en}, {and} \bibinfo{person}{Christian Stoll}.}
  \bibinfo{year}{2020}\natexlab{}.
\newblock \showarticletitle{Energy consumption of cryptocurrencies beyond
  bitcoin}.
\newblock \bibinfo{journal}{\emph{Joule}} \bibinfo{volume}{4},
  \bibinfo{number}{9} (\bibinfo{year}{2020}), \bibinfo{pages}{1843--1846}.
\newblock


\bibitem[\protect\citeauthoryear{Gervais, Karame, W{\"u}st, Glykantzis,
  Ritzdorf, and Capkun}{Gervais et~al\mbox{.}}{2016}]%
        {gervais2016security}
\bibfield{author}{\bibinfo{person}{Arthur Gervais}, \bibinfo{person}{Ghassan~O
  Karame}, \bibinfo{person}{Karl W{\"u}st}, \bibinfo{person}{Vasileios
  Glykantzis}, \bibinfo{person}{Hubert Ritzdorf}, {and} \bibinfo{person}{Srdjan
  Capkun}.} \bibinfo{year}{2016}\natexlab{}.
\newblock \showarticletitle{On the security and performance of proof of work
  blockchains}. In \bibinfo{booktitle}{\emph{Proceedings of the 2016 ACM SIGSAC
  conference on computer and communications security}}. \bibinfo{pages}{3--16}.
\newblock


\bibitem[\protect\citeauthoryear{Gilad, Hemo, Micali, Vlachos, and
  Zeldovich}{Gilad et~al\mbox{.}}{2017}]%
        {gilad2017algorand}
\bibfield{author}{\bibinfo{person}{Yossi Gilad}, \bibinfo{person}{Rotem Hemo},
  \bibinfo{person}{Silvio Micali}, \bibinfo{person}{Georgios Vlachos}, {and}
  \bibinfo{person}{Nickolai Zeldovich}.} \bibinfo{year}{2017}\natexlab{}.
\newblock \showarticletitle{Algorand: Scaling byzantine agreements for
  cryptocurrencies}. In \bibinfo{booktitle}{\emph{Proceedings of the 26th
  symposium on operating systems principles}}. \bibinfo{pages}{51--68}.
\newblock


\bibitem[\protect\citeauthoryear{Goertzel, Giacomelli, Hanson, Pennachin, and
  Argentieri}{Goertzel et~al\mbox{.}}{2017}]%
        {goertzel2017singularitynet}
\bibfield{author}{\bibinfo{person}{Ben Goertzel}, \bibinfo{person}{Simone
  Giacomelli}, \bibinfo{person}{David Hanson}, \bibinfo{person}{Cassio
  Pennachin}, {and} \bibinfo{person}{Marco Argentieri}.}
  \bibinfo{year}{2017}\natexlab{}.
\newblock \showarticletitle{SingularityNET: A decentralized, open market and
  inter-network for AIs}.
\newblock \bibinfo{journal}{\emph{Thoughts, Theories Stud. Artif. Intell.
  Res.}} (\bibinfo{year}{2017}).
\newblock


\bibitem[\protect\citeauthoryear{Halford}{Halford}{2014}]%
        {halford2014gridcoin}
\bibfield{author}{\bibinfo{person}{Rob Halford}.}
  \bibinfo{year}{2014}\natexlab{}.
\newblock \bibinfo{title}{Gridcoin: Crypto-currency using berkeley open
  infrastructure network computing grid as a proof of work}.
\newblock
\newblock


\bibitem[\protect\citeauthoryear{Haouari, Mhiri, El-Masri, and Al-Yafi}{Haouari
  et~al\mbox{.}}{2022}]%
        {haouari2022novel}
\bibfield{author}{\bibinfo{person}{Mohamed Haouari}, \bibinfo{person}{Mariem
  Mhiri}, \bibinfo{person}{Mazen El-Masri}, {and} \bibinfo{person}{Karim
  Al-Yafi}.} \bibinfo{year}{2022}\natexlab{}.
\newblock \showarticletitle{A novel proof of useful work for a blockchain
  storing transportation transactions}.
\newblock \bibinfo{journal}{\emph{Information Processing \& Management}}
  \bibinfo{volume}{59}, \bibinfo{number}{1} (\bibinfo{year}{2022}),
  \bibinfo{pages}{102749}.
\newblock


\bibitem[\protect\citeauthoryear{Hastings, Heninger, and Wustrow}{Hastings
  et~al\mbox{.}}{2019}]%
        {hastings2019short}
\bibfield{author}{\bibinfo{person}{Marcella Hastings}, \bibinfo{person}{Nadia
  Heninger}, {and} \bibinfo{person}{Eric Wustrow}.}
  \bibinfo{year}{2019}\natexlab{}.
\newblock \showarticletitle{Short paper: The proof is in the pudding}. In
  \bibinfo{booktitle}{\emph{International Conference on Financial Cryptography
  and Data Security}}. Springer, \bibinfo{pages}{396--404}.
\newblock


\bibitem[\protect\citeauthoryear{Heule and Kullmann}{Heule and
  Kullmann}{2017}]%
        {heule2017science}
\bibfield{author}{\bibinfo{person}{Marijn~JH Heule} {and}
  \bibinfo{person}{Oliver Kullmann}.} \bibinfo{year}{2017}\natexlab{}.
\newblock \showarticletitle{The science of brute force}.
\newblock \bibinfo{journal}{\emph{Commun. ACM}} \bibinfo{volume}{60},
  \bibinfo{number}{8} (\bibinfo{year}{2017}), \bibinfo{pages}{70--79}.
\newblock


\bibitem[\protect\citeauthoryear{Jensen and Toft}{Jensen and Toft}{2011}]%
        {jensen2011graph}
\bibfield{author}{\bibinfo{person}{Tommy~R Jensen} {and}
  \bibinfo{person}{Bjarne Toft}.} \bibinfo{year}{2011}\natexlab{}.
\newblock \bibinfo{booktitle}{\emph{Graph coloring problems}}.
\newblock \bibinfo{publisher}{John Wiley \& Sons}.
\newblock


\bibitem[\protect\citeauthoryear{Kiayias, Russell, David, and
  Oliynykov}{Kiayias et~al\mbox{.}}{2017}]%
        {kiayias2017ouroboros}
\bibfield{author}{\bibinfo{person}{Aggelos Kiayias}, \bibinfo{person}{Alexander
  Russell}, \bibinfo{person}{Bernardo David}, {and} \bibinfo{person}{Roman
  Oliynykov}.} \bibinfo{year}{2017}\natexlab{}.
\newblock \showarticletitle{Ouroboros: A provably secure proof-of-stake
  blockchain protocol}. In \bibinfo{booktitle}{\emph{Annual International
  Cryptology Conference}}. Springer, \bibinfo{pages}{357--388}.
\newblock


\bibitem[\protect\citeauthoryear{King}{King}{2013}]%
        {king2013primecoin}
\bibfield{author}{\bibinfo{person}{Sunny King}.}
  \bibinfo{year}{2013}\natexlab{}.
\newblock \showarticletitle{Primecoin: Cryptocurrency with prime number
  proof-of-work}.
\newblock \bibinfo{journal}{\emph{July 7th}} \bibinfo{volume}{1},
  \bibinfo{number}{6} (\bibinfo{year}{2013}).
\newblock


\bibitem[\protect\citeauthoryear{King and Nadal}{King and Nadal}{2012}]%
        {king2012ppcoin}
\bibfield{author}{\bibinfo{person}{Sunny King} {and} \bibinfo{person}{Scott
  Nadal}.} \bibinfo{year}{2012}\natexlab{}.
\newblock \showarticletitle{Ppcoin: Peer-to-peer crypto-currency with
  proof-of-stake}.
\newblock \bibinfo{journal}{\emph{self-published paper, August}}
  \bibinfo{volume}{19}, \bibinfo{number}{1} (\bibinfo{year}{2012}).
\newblock


\bibitem[\protect\citeauthoryear{Kr{\'o}l, Sonnino, Al-Bassam, Tasiopoulos, and
  Psaras}{Kr{\'o}l et~al\mbox{.}}{2019}]%
        {krol2019proof}
\bibfield{author}{\bibinfo{person}{Micha{\l} Kr{\'o}l},
  \bibinfo{person}{Alberto Sonnino}, \bibinfo{person}{Mustafa Al-Bassam},
  \bibinfo{person}{Argyrios Tasiopoulos}, {and} \bibinfo{person}{Ioannis
  Psaras}.} \bibinfo{year}{2019}\natexlab{}.
\newblock \showarticletitle{Proof-of-prestige: A useful work reward system for
  unverifiable tasks}. In \bibinfo{booktitle}{\emph{2019 IEEE International
  Conference on Blockchain and Cryptocurrency (ICBC)}}. IEEE,
  \bibinfo{pages}{293--301}.
\newblock


\bibitem[\protect\citeauthoryear{Larimer}{Larimer}{2014}]%
        {larimer2014delegated}
\bibfield{author}{\bibinfo{person}{Daniel Larimer}.}
  \bibinfo{year}{2014}\natexlab{}.
\newblock \showarticletitle{Delegated proof-of-stake (dpos)}.
\newblock \bibinfo{journal}{\emph{Bitshare whitepaper}}  \bibinfo{volume}{81}
  (\bibinfo{year}{2014}), \bibinfo{pages}{85}.
\newblock


\bibitem[\protect\citeauthoryear{Lewenberg, Sompolinsky, and Zohar}{Lewenberg
  et~al\mbox{.}}{2015}]%
        {lewenberg2015inclusive}
\bibfield{author}{\bibinfo{person}{Yoad Lewenberg}, \bibinfo{person}{Yonatan
  Sompolinsky}, {and} \bibinfo{person}{Aviv Zohar}.}
  \bibinfo{year}{2015}\natexlab{}.
\newblock \showarticletitle{Inclusive block chain protocols}. In
  \bibinfo{booktitle}{\emph{International Conference on Financial Cryptography
  and Data Security}}. Springer, \bibinfo{pages}{528--547}.
\newblock


\bibitem[\protect\citeauthoryear{Li, Weng, Yang, Lu, Zhang, Hou, Liu, Xiang,
  and Deng}{Li et~al\mbox{.}}{2018}]%
        {li2018crowdbc}
\bibfield{author}{\bibinfo{person}{Ming Li}, \bibinfo{person}{Jian Weng},
  \bibinfo{person}{Anjia Yang}, \bibinfo{person}{Wei Lu}, \bibinfo{person}{Yue
  Zhang}, \bibinfo{person}{Lin Hou}, \bibinfo{person}{Jia-Nan Liu},
  \bibinfo{person}{Yang Xiang}, {and} \bibinfo{person}{Robert~H Deng}.}
  \bibinfo{year}{2018}\natexlab{}.
\newblock \showarticletitle{Crowdbc: A blockchain-based decentralized framework
  for crowdsourcing}.
\newblock \bibinfo{journal}{\emph{IEEE Transactions on Parallel and Distributed
  Systems}} \bibinfo{volume}{30}, \bibinfo{number}{6} (\bibinfo{year}{2018}),
  \bibinfo{pages}{1251--1266}.
\newblock


\bibitem[\protect\citeauthoryear{Li, Andreina, Bohli, and Karame}{Li
  et~al\mbox{.}}{2017}]%
        {li2017securing}
\bibfield{author}{\bibinfo{person}{Wenting Li}, \bibinfo{person}{S{\'e}bastien
  Andreina}, \bibinfo{person}{Jens-Matthias Bohli}, {and}
  \bibinfo{person}{Ghassan Karame}.} \bibinfo{year}{2017}\natexlab{}.
\newblock \showarticletitle{Securing proof-of-stake blockchain protocols}.
\newblock In \bibinfo{booktitle}{\emph{Data Privacy Management,
  Cryptocurrencies and Blockchain Technology}}. \bibinfo{publisher}{Springer},
  \bibinfo{pages}{297--315}.
\newblock


\bibitem[\protect\citeauthoryear{Loe and Quaglia}{Loe and Quaglia}{2018}]%
        {loe2018conquering}
\bibfield{author}{\bibinfo{person}{Angelique~Faye Loe} {and}
  \bibinfo{person}{Elizabeth~A Quaglia}.} \bibinfo{year}{2018}\natexlab{}.
\newblock \showarticletitle{Conquering generals: an np-hard proof of useful
  work}. In \bibinfo{booktitle}{\emph{Proceedings of the 1st Workshop on
  Cryptocurrencies and Blockchains for Distributed Systems}}.
  \bibinfo{pages}{54--59}.
\newblock


\bibitem[\protect\citeauthoryear{Nair and Dorai}{Nair and Dorai}{2021}]%
        {nair2021evaluation}
\bibfield{author}{\bibinfo{person}{P~Rajitha Nair} {and}
  \bibinfo{person}{D~Ramya Dorai}.} \bibinfo{year}{2021}\natexlab{}.
\newblock \showarticletitle{Evaluation of performance and security of proof of
  work and proof of stake using blockchain}. In \bibinfo{booktitle}{\emph{2021
  Third International Conference on Intelligent Communication Technologies and
  Virtual Mobile Networks (ICICV)}}. IEEE, \bibinfo{pages}{279--283}.
\newblock


\bibitem[\protect\citeauthoryear{Nakamoto}{Nakamoto}{2008}]%
        {nakamoto2008bitcoin}
\bibfield{author}{\bibinfo{person}{Satoshi Nakamoto}.}
  \bibinfo{year}{2008}\natexlab{}.
\newblock \showarticletitle{Bitcoin: A peer-to-peer electronic cash system}.
\newblock \bibinfo{journal}{\emph{Decentralized Business Review}}
  (\bibinfo{year}{2008}), \bibinfo{pages}{21260}.
\newblock


\bibitem[\protect\citeauthoryear{Padberg}{Padberg}{1975}]%
        {padberg1975note}
\bibfield{author}{\bibinfo{person}{Manfred~W Padberg}.}
  \bibinfo{year}{1975}\natexlab{}.
\newblock \showarticletitle{A note on zero-one programming}.
\newblock \bibinfo{journal}{\emph{Operations Research}} \bibinfo{volume}{23},
  \bibinfo{number}{4} (\bibinfo{year}{1975}), \bibinfo{pages}{833--837}.
\newblock


\bibitem[\protect\citeauthoryear{Poon and Dryja}{Poon and Dryja}{2016}]%
        {poon2016bitcoin}
\bibfield{author}{\bibinfo{person}{Joseph Poon} {and} \bibinfo{person}{Thaddeus
  Dryja}.} \bibinfo{year}{2016}\natexlab{}.
\newblock \bibinfo{title}{The bitcoin lightning network: Scalable off-chain
  instant payments}.
\newblock
\newblock


\bibitem[\protect\citeauthoryear{Prosser}{Prosser}{1993}]%
        {prosser1993hybrid}
\bibfield{author}{\bibinfo{person}{Patrick Prosser}.}
  \bibinfo{year}{1993}\natexlab{}.
\newblock \showarticletitle{Hybrid algorithms for the constraint satisfaction
  problem}.
\newblock \bibinfo{journal}{\emph{Computational intelligence}}
  \bibinfo{volume}{9}, \bibinfo{number}{3} (\bibinfo{year}{1993}),
  \bibinfo{pages}{268--299}.
\newblock


\bibitem[\protect\citeauthoryear{Rauchs, Blandin, and Dek}{Rauchs
  et~al\mbox{.}}{2020}]%
        {rauchs2020cambridge}
\bibfield{author}{\bibinfo{person}{Michel Rauchs}, \bibinfo{person}{A Blandin},
  {and} \bibinfo{person}{A Dek}.} \bibinfo{year}{2020}\natexlab{}.
\newblock \showarticletitle{Cambridge bitcoin electricity consumption Index
  (CBECI)}.
\newblock \bibinfo{journal}{\emph{URL: https://cbeci. org/mining\_map (accessed
  2021-09-10)}} (\bibinfo{year}{2020}).
\newblock


\bibitem[\protect\citeauthoryear{Saad, Qin, Ren, Nyang, and Mohaisen}{Saad
  et~al\mbox{.}}{2021}]%
        {saad2021pos}
\bibfield{author}{\bibinfo{person}{Muhammad Saad}, \bibinfo{person}{Zhan Qin},
  \bibinfo{person}{Kui Ren}, \bibinfo{person}{DaeHun Nyang}, {and}
  \bibinfo{person}{David Mohaisen}.} \bibinfo{year}{2021}\natexlab{}.
\newblock \showarticletitle{e-pos: Making proof-of-stake decentralized and
  fair}.
\newblock \bibinfo{journal}{\emph{IEEE Transactions on Parallel and Distributed
  Systems}} \bibinfo{volume}{32}, \bibinfo{number}{8} (\bibinfo{year}{2021}),
  \bibinfo{pages}{1961--1973}.
\newblock


\bibitem[\protect\citeauthoryear{Schwartz, Youngs, Britto,
  et~al\mbox{.}}{Schwartz et~al\mbox{.}}{2014}]%
        {schwartz2014ripple}
\bibfield{author}{\bibinfo{person}{David Schwartz}, \bibinfo{person}{Noah
  Youngs}, \bibinfo{person}{Arthur Britto}, {et~al\mbox{.}}}
  \bibinfo{year}{2014}\natexlab{}.
\newblock \showarticletitle{The ripple protocol consensus algorithm}.
\newblock \bibinfo{journal}{\emph{Ripple Labs Inc White Paper}}
  \bibinfo{volume}{5}, \bibinfo{number}{8} (\bibinfo{year}{2014}),
  \bibinfo{pages}{151}.
\newblock


\bibitem[\protect\citeauthoryear{Simonis}{Simonis}{2005}]%
        {simonis2005sudoku}
\bibfield{author}{\bibinfo{person}{Helmut Simonis}.}
  \bibinfo{year}{2005}\natexlab{}.
\newblock \showarticletitle{Sudoku as a constraint problem}. In
  \bibinfo{booktitle}{\emph{CP Workshop on modeling and reformulating
  Constraint Satisfaction Problems}}, Vol.~\bibinfo{volume}{12}. Citeseer,
  \bibinfo{pages}{13--27}.
\newblock


\bibitem[\protect\citeauthoryear{Szabo}{Szabo}{1997}]%
        {szabo1997formalizing}
\bibfield{author}{\bibinfo{person}{Nick Szabo}.}
  \bibinfo{year}{1997}\natexlab{}.
\newblock \showarticletitle{Formalizing and securing relationships on public
  networks}.
\newblock \bibinfo{journal}{\emph{First monday}} (\bibinfo{year}{1997}).
\newblock


\bibitem[\protect\citeauthoryear{Tromp}{Tromp}{2015}]%
        {tromp2015cuckoo}
\bibfield{author}{\bibinfo{person}{John Tromp}.}
  \bibinfo{year}{2015}\natexlab{}.
\newblock \showarticletitle{Cuckoo cycle: a memory bound graph-theoretic
  proof-of-work}. In \bibinfo{booktitle}{\emph{International Conference on
  Financial Cryptography and Data Security}}. Springer,
  \bibinfo{pages}{49--62}.
\newblock


\bibitem[\protect\citeauthoryear{Truby}{Truby}{2018}]%
        {truby2018decarbonizing}
\bibfield{author}{\bibinfo{person}{Jon Truby}.}
  \bibinfo{year}{2018}\natexlab{}.
\newblock \showarticletitle{Decarbonizing Bitcoin: Law and policy choices for
  reducing the energy consumption of Blockchain technologies and digital
  currencies}.
\newblock \bibinfo{journal}{\emph{Energy research \& social science}}
  \bibinfo{volume}{44} (\bibinfo{year}{2018}), \bibinfo{pages}{399--410}.
\newblock


\bibitem[\protect\citeauthoryear{Zhang, Eyal, Escriva, Juels, and
  Van~Renesse}{Zhang et~al\mbox{.}}{2017}]%
        {zhang2017rem}
\bibfield{author}{\bibinfo{person}{Fan Zhang}, \bibinfo{person}{Ittay Eyal},
  \bibinfo{person}{Robert Escriva}, \bibinfo{person}{Ari Juels}, {and}
  \bibinfo{person}{Robbert Van~Renesse}.} \bibinfo{year}{2017}\natexlab{}.
\newblock \showarticletitle{$\{$REM$\}$: Resource-efficient mining for
  blockchains}. In \bibinfo{booktitle}{\emph{26th $\{$USENIX$\}$ Security
  Symposium ($\{$USENIX$\}$ Security 17)}}. \bibinfo{pages}{1427--1444}.
\newblock


\end{thebibliography}

\end{document}